\documentclass[a4paper,11pt]{article} 

\usepackage[margin=3.5cm]{geometry} 
\usepackage[italian,english]{babel}  
\usepackage[T1]{fontenc}             
\usepackage[utf8]{inputenc}          
\usepackage{microtype}               

\usepackage{csquotes}   
\usepackage{authblk}    

\usepackage{amssymb,amsmath,mathtools,amsthm} 
\usepackage{physics}   
\usepackage{dsfont}    

\usepackage{listings}  

\usepackage{graphicx}   
\graphicspath{{Figures/}} 
\usepackage[section]{placeins} 
\usepackage{subcaption} 

\usepackage{booktabs}   
\usepackage{tabularx}   
\usepackage{adjustbox}  
\usepackage{multirow}   
\usepackage{float}      
\usepackage[font={footnotesize}, hypcap=false]{caption} 
\usepackage[flushleft]{threeparttable} 
\usepackage{makecell}           

\usepackage{tikz} 
\usetikzlibrary{shapes,arrows,backgrounds,positioning,decorations.text}

\tikzstyle{startstop} = [rectangle, rounded corners, minimum width=3cm, minimum height=1cm,text centered, draw=black, fill=gray!20]
\tikzstyle{process} = [rectangle, minimum width=3cm, minimum height=1cm, text centered, draw=black, fill=blue!20]
\tikzstyle{decision} = [diamond, aspect=2, minimum width=3cm, minimum height=1cm, text centered, draw=black, fill=orange!20]
\tikzstyle{arrow} = [thick,->,>=stealth]

\usepackage{natbib} 
\newtheoremstyle{mystyle} 
  {}{}{\itshape}{}{\bfseries}{:}{ }{}
\theoremstyle{mystyle}
\theoremstyle{definition}
\newtheorem{proposition}{Proposition}
\newtheorem{remark}{Remark}

\usepackage[
    colorlinks=true, 
    linkcolor=blue,  
    citecolor=blue,  
    urlcolor=blue,   
    filecolor=blue   
]{hyperref}

\usepackage[nameinlink,noabbrev]{cleveref}
\crefname{hypothesis}{Hypothesis}{Hypotheses}
\Crefname{hypothesis}{Hypothesis}{Hypotheses}
\crefname{result}{Result}{Results}
\Crefname{result}{Result}{Results}

\usepackage{textcomp} 
\usepackage{framed}   
\usepackage[page]{appendix} 

\newcommand\extrafootertext[1]{
    \bgroup
    \renewcommand\thefootnote{\fnsymbol{footnote}}%
    \renewcommand\thempfootnote{\fnsymbol{mpfootnote}}%
    \footnotetext[0]{#1}%
    \egroup
}

\usepackage[draft]{changes} 
\definechangesauthor[name={Marco},  color=orange]{M}
\definechangesauthor[name={Laura},  color=purple]{L}
\definechangesauthor[name={Lorenzo}, color=green]{LS}

\usepackage[most]{tcolorbox}
\tcbuselibrary{breakable,skins}
\usepackage{enumitem} 
\newtcolorbox{algobox}[2][]{%
  enhanced,
  colback=white,
  colframe=black,
  boxrule=0.5pt,
  arc=2mm,
  title={#2},
  fonttitle=\bfseries,
  coltitle=black,
  attach boxed title to top left={xshift=2mm,yshift*=-2mm},
  boxed title style={
    colback=white,
    colframe=black,
    boxrule=0.5pt,
    arc=1.5mm,
    top=1pt, bottom=1pt,
    left=2mm, right=2mm
  },
  #1
}
\setlist[enumerate,1]{label=\textbf{\arabic*.}, leftmargin=*, itemsep=2pt}
\setlist[enumerate,2]{label=(\alph*), leftmargin=*, itemsep=2pt}
\setlist[itemize]{leftmargin=*, itemsep=2pt}

\begin{document}


\title{Nonlinear dynamics of educational choices under social influence and endogenous returns}
\author[a]{Andrea Caravaggio}
\author[b]{Marco Catola}
\author[c,*]{Silvia Leoni}
\affil[a]{\footnotesize Department of Economics and Statistics, University of Siena}
\affil[b]{\footnotesize Department of Law, University of Pisa}
\affil[c]{Department of Economics and Management, University of Florence}
\affil[*]{\footnotesize Corresponding author\vspace{1em}}
\date{}

\maketitle


\begin{abstract} 

Decisions to pursue higher education are not fully explained by economic incentives, with social influence and peer effects playing a crucial, yet dynamically understudied, role. This paper develops a theoretical non-linear dynamics model analysing the interplay between economic returns and social pressure. We model a heterogeneous population of "Followers," who exhibit imitative behaviour, and "Positional Agents," who display counter-adaptive behaviour. Agents' preferences for education evolve endogenously, reacting to both aggregate enrolment and an endogenous wage premium that declines with the supply of educated workers. The aggregate dynamics are governed by a one-dimensional non-linear map. By assuming fixed population structure. we show that the social conflict between pro-cyclical imitative forces and counter-cyclical positional forces can destabilize the steady state, generating a period-doubling route to chaos. These complex, endogenous fluctuations in enrolment emerge only for intermediate, heterogeneous population mixes, while homogeneous populations remain stable. We argue that this instability represents a significant coordination failure, scrambling economic signals and hindering rational long-term planning for both students and institutions, making it a key policy concern. Finally, we also extend the result to the case where the population structure is endogenous.
\noindent

\noindent 
\\[.1cm]
\emph{JEL classification}: $C61$; $D91$; $I21$\\
\emph{Keywords}: Educational choices; Non-linear Dynamics; Peer Effects; Wage Premium; Heterogeneous Agents.
\end{abstract}


\section{Introduction}
In recent years, several studies have highlighted that the decision to enrol in higher education cannot be fully explained by purely economic incentives. In the Italian case, despite relatively low tuition fees and the absence of formal barriers to access, university attainment remains among the lowest in Europe. As shown by \citet{Leoni2022}, this pattern reflects not only the limited wage premium associated with tertiary education (see also \citealt{Viesti2018}; \citealt{Franzini2019}; \citealt{Naticchioni2016}), but also the crucial role of \emph{social influence} in shaping educational choices. In her agent-based model of the Italian higher education system, Leoni models individuals’ preferences for enrolling on three main components: economic motivation, the effort required to complete university studies, and—most importantly—social interaction within one’s reference network. Young people may decide to enrol in university not only because of the expected monetary return, but also because their peers do so; this \emph{educational conformism} generates patterns of collective behaviour that cannot be reduced to individual optimisation. The importance of social and peer effects in educational decisions has been further confirmed by more recent empirical and experimental works: \citet{Pratschke2023} model peer effects in Italian classrooms using a spatial autoregressive framework, showing that endogenous interactions significantly shape students’ trajectories; \citet{Zarate2023} find that peer centrality within social networks amplifies both social and academic skills; and \citet{Carlana2025} highlight that social image concerns and same-gender interactions crucially affect the likelihood of pursuing higher education. Beyond the Italian case, a growing body of research in economics and social dynamics emphasizes how imitation, conformity, and social contagion can account for aggregate patterns in education and inequality (see, among others, \citealt{Dimarco2024}; \citealt{Herstad2024}; \citealt{DeDomenico2025}; \citealt{ChibaOkabe2024}), reinforcing the view that educational choices are socially interdependent and that local interactions can produce complex collective outcomes. The present paper develops a theoretical and dynamic counterpart of this literature: we retain the idea that individual educational choices are shaped by both economic incentives and social influence, but we formalise this mechanism in a stylised analytical way by adopting a \emph{bandwagon--snob} dynamic à la \citet{Caravaggio2020}. In their model of consumption choices, Caravaggio and Sodini—building on the classical insights of \citet{Leibenstein1950}, \citet{Veblen1899}, and \citet{Simmel1904}, and further developed by \citet{Naimzada2009}—show how imitation and distinction among social groups can endogenously generate cyclical and even chaotic collective patterns. 

Transposing this intuition into the context of education, we represent the social influence component of \citet{Leoni2022} through a \emph{bandwagon mechanism}: individuals’ inclination to enrol increases with the observed prevalence of enrolment among peers, while other groups may display countervailing tendencies towards distinction or non-conformity. Related theoretical approaches to individual schooling choices include dynamic, generational (OLG-style) models where borrowing constraints and indivisibilities can trap families in low-education equilibria and generate multiple steady states \citep{Galor1993, DeLaCroix2003}, and a public-economics strand that studies optimal policy in \emph{static} equilibrium setups: partial-equilibrium household-production models that justify quantity controls like compulsory schooling \citep{Balestrino2017}, sectoral general-equilibrium school-choice models with endogenous tuition and peer composition \citep{Epple2002}, and economy-wide general-equilibrium models with endogenous skill prices under SBTC that characterize the optimal mix of linear income taxes and education subsidies \citep{Jacobs2022}. 

By contrast, in this paper we focus on a bandwagon–snob dynamic à la \citet{Caravaggio2020}, offering an explicitly dynamic account of how the wage premium and peer pressure jointly shape aggregate enrolment. This approach provides a compact representation of social diffusion processes and allows us to study analytically how the interplay between economic incentives and social imitation may produce multiple equilibria, cyclical fluctuations, or abrupt shifts in aggregate participation—phenomena reminiscent of the endogenous instability discussed by \citet{Benhabib1981}, who showed that even rational choices in a stationary environment can generate erratic, non-convergent trajectories when preferences evolve through experience.

Our contribution is twofold: first, we provide a formal theoretical analysis of how the interaction between expected economic returns to education (the \emph{wage premium}) and social influence (\emph{peer pressure}) can generate endogenous oscillations in aggregate educational cycles that emerge from internal feedback rather than exogenous shocks; second, we extend the scope of bandwagon--snob models to the domain of educational decisions, highlighting that the same socially interactive forces governing collective consumption behaviour may also underlie collective educational patterns, thereby building a conceptual bridge between the economics of education and the non-linear dynamics of social interaction. 

The remainder of the paper is organised as follows: \Cref{sec:model} introduces the basic dynamic model, describing how individual educational choices depend jointly on economic motivations and social interactions; \Cref{sec:result} presents the analytical and numerical exploration of the model with fixed population shares, focusing on the conditions under which imitation and distinction give rise to regular or irregular collective dynamics while \Cref{sec:endogenous} considers the more general case with endogenous shares. \Cref{sec:conclusion} concludes by outlining possible extensions and policy implications.

 \section{The Model}
 \label{sec:model}

We consider an economy where a unit mass of agents is split into two behavioural types: followers ($F$) and positional agents ($P$) where $\lambda\in(0,1)$ is the fixed share of conformists. Each agent has to decide the degree of education that they want to achieve ($e$) and, alternatively, can consume a generic consumption good $c$.

In each period each agent has an income $I$, while $p_e$ is the unit cost of education (that includes tuition and accessory costs) and $p_c$ is the cost of the consumption good. We assume that income and prices are constant over time. The budget constraint is therefore equal to:  
\begin{equation}
p_e\,e_{i,t}+p_c \,c_{i,t}= I .
\end{equation}

We model agents' preferences with a Cobb--Douglas utility function with time and type endogenous weights.

\begin{equation}
U_{i,t}(e_{i,t},c_{i,t};\alpha_{i,t},\beta_{i,t})=\alpha_{i,t}\log e_{i,t}+\beta_{i,t}\log c_{i,t},
\quad i\in\{F,P\}.
\end{equation}
Solving the constrained utility maximisation problem we get:

\begin{equation}
\label{eq:demand}
e^{*}_{i,t}
=\frac{\alpha_{i,t+1}}{\alpha_{i,t+1}+\beta_{i,t+1}} \cdot \frac{I}{p_e};
\qquad
c^{*}_{i,t} =\frac{\beta_{i,t+1}}{\alpha_{i,t+1}+\beta_{i,t+1}} \cdot \frac{I}{p_c},
\qquad i\in\{F,P\}.
\end{equation}

Starting from the individual demand we can derive the aggregate level of education and consumptions as:

\begin{align}
\label{eq:aggregate}
E_t=\lambda\,e_{F,t}+(1-\lambda)\,e_{P,t},\qq{where} E_t\in\Big[0,\frac{I}{p_e}\Big].\\
C_t=\lambda\,c_{F,t}+(1-\lambda)\,c_{P,t}, \qq{where} C_t\in\Big[0,\frac{I}{p_c}\Big].
\end{align}

As common in this literature, we assume that agents' preferential parameters depends on the demand of education and consumption from the previous period. However, education is not a standard consumption good. It possesses a dual nature: it is both a consumption good that provides immediate utility and an investment good in human capital, which can increase future earnings. We model this economic motivation by introducing a wage premium for obtaining a higher level of education. We assume this premium is endogenous and negatively correlated with the aggregate level of education ($E_t$). This captures a standard labour supply dynamic: the premium is high when educated workers are scarce and decreases as the supply of educated labour increases. We model the wage premium $\Pi_t$ in a simple way:

\begin{equation}
\Pi_t=\max(\, \overline{\Pi} - \kappa \, E_t; \; 0 \,) \qquad \overline{\Pi}>0, \, \kappa\ge 0,
\end{equation}
where $\overline{\Pi}$ is the highest possible wage premium, while $\kappa$ is a sensitivity parameter measuring the downward adjustment of the wage premium at the increase in supply of educated workers.

We assume that while followers are motivated by the imitative behaviour, the positional agents also care about the wage premium. Following \citep{Caravaggio2020}, the preferential parameters are defined as follows: 
\begin{equation}
\label{eq:parameters}
\begin{alignedat}{2}
\alpha_{F,t+1} &= 1 - \exp \left( - \rho \,E_t - \rho_\pi \Pi_t \right) \qquad
&\beta_{F,t+1} &= 1 - \exp\bigl(-\rho\, C_t\bigr) \\
\alpha_{P,t+1} &= \exp\bigl(-\sigma E_t\bigr)\,\bigl(1-\exp(-\sigma_{\Pi}\Pi_t)\bigr) \qquad
&\beta_{P,t+1} &= \exp\bigl(-\sigma\, C_t\bigr)
\end{alignedat}
\end{equation}

Since the aim of the analysis is to study educational choices, we can reduce the problem's dimensionality. The aggregate budget constraint creates a fixed linear trade-off between the two goods:
\begin{equation}
C_t = \frac{I}{p_c} - \frac{p_e}{p_c} E_t,
\end{equation}
Thus, the dynamics of aggregate consumption $C_t$ are a mere reflection of the dynamics of aggregate education $E_t$. It is therefore possible, and sufficient, to limit the entire analysis to the one-dimensional map for $E_t$.

Thus, by combining \cref{eq:aggregate,eq:demand} we obtain:





\begin{equation}
E_{t+1}
=\frac{I}{p_e}\left[
\lambda\,\frac{\alpha_{F,t+1}}{\alpha_{F,t+1}+\beta_{F,t+1}}+(1-\lambda)\,\frac{\alpha_{P,t+1}}{\alpha_{P,t+1}+\beta_{P,t+1}}
\right].
\end{equation}

After substituting \cref{eq:parameters} we obtain the following Map $\Gamma$:
\begin{equation}
\begin{aligned}
\Gamma:E_{t+1} =\frac{I}{p_c} \cdot \Bigg[\lambda \frac{1 - \exp(-\rho E_t - \rho_\pi \Pi_t)}{2-\exp(-\rho\,E_t - \rho_\pi \Pi_t)-\exp(-\frac{\rho(I-p_e\, E_t)}{p_c})} \hspace{6em}\\
 +(1-\lambda)\frac{\exp({-\sigma\, E_t}) \cdot (1- \exp(-\sigma_\pi \Pi_t))}{\exp({-\sigma \, E_t}) \cdot (1- \exp(-\sigma_\pi \Pi_t) + \exp({-\frac{\sigma(I-p_e E_t)}{p_c}})} \Bigg].
\label{eq:mappa}
\end{aligned}
\end{equation}

\section{Dynamic Trajectories with fixed population structure}
\label{sec:result}

We start our analysis by proving that the map admits at least one fixed point. 
\begin{proposition}[Existence of the fixed point]
	Let $\Gamma$ be the function defined in (\ref{eq:mappa}) and assume for simplicity that $\rho=\rho_\pi$ and $\sigma=\sigma_\pi$. Then:
    \begin{enumerate}[label=\textbf{(\roman*)}]
        \item if $p_e \leq p_c$ at last one fixed point always exists
        \item if $p_e > p_c$ a sufficient condition for the existence of at least one fixed point is 
        $$\lambda < \frac{p_c/p_e - S}{1 - S}  \qq{with} S = \frac{\exp(\sigma \frac{I}{p_e}(1 - \kappa) - \sigma \pi)}{1 + \exp(\sigma \frac{I}{p_e}(1 - \kappa) - \sigma \pi)}.$$
    \end{enumerate}
	\label{atleast}
\end{proposition}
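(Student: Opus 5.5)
The natural route is the intermediate value theorem applied to $G(E)=\Gamma(E)-E$ on a compact interval. First, $\Gamma$ is continuous on $[0,I/p_e]$. The map involves $\max(\overline{\Pi}-\kappa E,0)$, which is continuous, and every denominator is strictly positive. In the follower term, $\alpha_F+\beta_F>0$ unless both vanish. That happens only if $E=0$ and $\Pi=0$ simultaneously, which is impossible since $\overline{\Pi}>0$, or at corners we can handle separately. In the positional term, $\beta_P=\exp(-\sigma C)>0$.

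Since each fraction lies in $[0,1]$, the bracket is a convex combination of numbers in $[0,1]$. Hence $\Gamma(E)\in[0,\frac{I}{p_c}]$. (I will read the prefactor $I/p_c$ in \eqref{eq:mappa} as written; with the derivation from \cref{eq:demand} it would be $I/p_e$, but I follow the displayed map, since the case split on $p_e$ versus $p_c$ clearly comes from this prefactor.)

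At $E=0$ we have $G(0)=\Gamma(0)\ge 0$. So it suffices to show $G(I/p_e)\le 0$, i.e.\ $\Gamma(I/p_e)\le I/p_e$.

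\emph{Case (i).} If $p_e\le p_c$, then $\Gamma(E)\le I/p_c\le I/p_e$ for all $E$. In particular $G(I/p_e)\le 0$, and the intermediate value theorem gives a fixed point in $[0,I/p_e]$. More simply, $\Gamma$ maps $[0,I/p_e]$ into itself, which is Brouwer in one dimension.

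\emph{Case (ii).} Here the crude bound fails, so I evaluate $\Gamma$ at $\bar E=I/p_e$ explicitly. At $\bar E$ we have $C=0$, so $\beta_F=0$ and $\beta_P=1$. The follower fraction then equals $1$, provided $\alpha_F>0$, which holds.

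For the positional fraction I use $\sigma=\sigma_\pi$ and write $\pi$ for $\overline\Pi$, treating the region where the max is inactive so that $\Pi=\pi-\kappa\bar E$. The fraction is $\frac{x}{x+1}$ with $x=e^{-\sigma\bar E}(1-e^{-\sigma\Pi})$. I would bound it by the logistic-type quantity $S$: since $1-e^{-\sigma\Pi}\le 1$, we get $x\le e^{-\sigma \bar E}$. The stated $S$ corresponds to replacing $x$ by $\exp(\sigma\bar E(1-\kappa)-\sigma\pi)$. I would try to justify $x\le \exp(\sigma\bar E(1-\kappa)-\sigma\pi)$ through an inequality of the form $1-e^{-y}\le e^{y}\cdot(\cdot)$, or by adjusting which bound is used. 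Concretely, I would look for the elementary estimate relating $e^{-\sigma\bar E}(1-e^{-\sigma(\pi-\kappa\bar E)})$ to $e^{\sigma\bar E-\sigma\kappa\bar E-\sigma\pi}$. Because $t\mapsto t/(1+t)$ is increasing, the positional fraction is then at most $S$.

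This gives $\Gamma(\bar E)\le \frac{I}{p_c}[\lambda+(1-\lambda)S]$. The requirement $\Gamma(\bar E)\le \bar E$ becomes $\lambda(1-S)+S\le p_c/p_e$, i.e.\ $\lambda<\frac{p_c/p_e-S}{1-S}$, using $S<1$. This is exactly the stated condition, and the intermediate value theorem concludes.

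The main obstacle is this last bounding step. I need to reconcile the exact form of $S$ with the expression for $\alpha_P$ and to handle the kink of $\max(\cdot,0)$. If $\Pi=0$, then $\alpha_P=0$, the positional fraction is $0\le S$, and the case is trivial.
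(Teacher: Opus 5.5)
Your strategy is the paper's: $\Gamma(0)\ge 0$, the endpoint test $\Gamma(\bar E)\le \bar E$ with $\bar E=I/p_e$, and the intermediate value theorem. Part (i) is correct for the displayed map with prefactor $I/p_c$, which the paper also uses.

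The gap is the step you flag in (ii), and it cannot be closed. You correctly find that at $\bar E$ the follower share is $1$ and the positional share is $x/(1+x)$, with $x=e^{-\sigma\bar E}\bigl(1-e^{-\sigma(\overline{\Pi}-\kappa\bar E)}\bigr)$. The estimate you need is $x\le e^{\sigma\bar E(1-\kappa)-\sigma\overline{\Pi}}$. Multiply by $e^{\sigma\bar E}$ and set $u=\sigma(\overline{\Pi}-\kappa\bar E)>0$; it becomes $e^{u}\le 1+e^{2\sigma\bar E(1-\kappa)}$. This fails as soon as $\overline{\Pi}$ is large relative to $\bar E$. With the paper's calibration ($\overline{\Pi}=100$, $\kappa=0.3$, $\bar E=1/1.2$) one has $S<e^{-99\sigma}$, whereas $x/(1+x)$ is essentially $1/(1+e^{\sigma\bar E})$. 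The paper's proof offers no way around this. It asserts the reduced inequality $\lambda+(1-\lambda)S\le p_c/p_e$ without ever computing the positional share at $\bar E$.

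In fact, reading $\pi$ as $\overline{\Pi}$ as you do, claim (ii) is false, so no argument of this kind can succeed. Take $I=1$, $p_e=1.2$, $p_c=0.53$, $\kappa=0.3$, $\overline{\Pi}=100$, $\rho=\rho_\pi=0.98$, $\sigma=\sigma_\pi=1$, $\lambda=0.3$. The bound in (ii) is then essentially $p_c/p_e\approx 0.44>\lambda$, so the hypothesis holds. Here the follower share $s_F$ is at least $1/2$ and increasing in $E$. The positional share $s_P$ is decreasing, with $s_P(\bar E)\approx 1/(1+e^{5/6})\approx 0.303$. Hence $\Gamma(E)\ge (0.15+0.7\cdot 0.303)/0.53\approx 0.683>E$ on $[0,0.68]$. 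On $[0.68,\bar E]$ one has $s_F\ge 0.776$, so $\Gamma(E)\ge 0.839>\bar E\approx 0.833$. Thus $\Gamma(E)>E$ on all of $[0,\bar E]$ and there is no fixed point.

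What your argument does establish is (ii) with $S$ replaced by the true share $\hat S:=x/(1+x)$, where $\hat S=0$ if $\Pi(\bar E)=0$. Alternatively you can use the cruder bound $1/(1+e^{\sigma\bar E})$, which follows from $x\le e^{-\sigma\bar E}$. Note also that the derivation from the demands actually gives the prefactor $I/p_e$. With that prefactor $\Gamma$ maps $[0,\bar E]$ into itself, and your case (i) argument yields a fixed point unconditionally.
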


\begin{proof}
	By calculating the value assumed by $\sigma$ at the extremes of the interval $[0,\frac{I}{p_e}]$, i.e. $\Gamma(0)$ and $\Gamma(\frac{I}{p_e})$, we have that:
	
	\begin{equation*}
	\Gamma(0)=\frac{I}{p_c}\left[\frac{\lambda(1-\exp(-\rho_\pi \bar{\Pi}))}{2 -\exp(-\rho_\pi \bar{\Pi})- \exp(-\rho I/p_c)}
    +\frac{ (1-\lambda)(1-\exp(-\sigma_\pi \bar{\Pi}))}{1-\exp(-\sigma_\pi \bar{\Pi}) + \exp(-\sigma I/p_c)}\right] \geq 0.
	\end{equation*}
    By solving $\Gamma(I/p_e) \leq 0$ we obtain 
	\begin{equation*}
	\lambda + (1-\lambda) S \leq \frac{I}{p_e}
	\end{equation*}
	This last condition is always satisfied if either condition $(i)$ or $(ii)$ are satisfied. 
\end{proof}

This result establishes the conditions under which at least one fixed point necessarily exists. However, due to the nonlinearity of the map, it is not possible to perform a detailed analytical study of the number and stability of the system’s fixed points. 

Nevertheless, the trajectories of the system can be characterised by means of the following result:

\begin{proposition}[Absorbing interval for $\Gamma$]\label{prop:absorbing} 
Let $D=[0,\bar E]\subset\mathbb{R}$ with $\bar E=I/p_e>0$ and assume the following conditions for the map $\Gamma:D\to D$, whose specification is given by \cref{eq:mappa}:
\begin{enumerate}[label=(H\arabic*)]
    \item $\Gamma$ is continuous on $D$ and $\Gamma(D)\subseteq D$.
    \item There exists a unique $E_c\in(0,\bar E)$ such that 
    \[
    \Gamma'(E)>0 \text{ for } E\in[0,E_c),\qquad
    \Gamma'(E)<0 \text{ for } E\in(E_c,\bar E],
    \]
    and $\Gamma(E)\le\Gamma(E_c)$ for all $E\in D$.
\end{enumerate}
By defining $
E_{\max}:=\Gamma(E_c)$, $E_{\min}:=\Gamma^2(E_c)=\Gamma(\Gamma(E_c))$ and $ J_\Gamma:=[E_{\min},E_{\max}]\subseteq D$.
Then the following properties hold:
\begin{enumerate}[label=(\roman*)]
    \item $\Gamma(J_\Gamma)\subseteq J_\Gamma$;
    \item $\Gamma^2(D)\subseteq J_\Gamma$. 
    Consequently, for every $E_0\in D$ there exists $n\le 2$ such that 
    $\Gamma^n(E_0)\in J_\Gamma$, and therefore $\Gamma^k(E_0)\in J_\Gamma$ for all $k\ge n$.
\end{enumerate}
\end{proposition}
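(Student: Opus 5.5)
The plan is the standard argument for unimodal interval maps. It uses only (H1)–(H2), together with the elementary fact that a continuous map sends an interval onto an interval whose endpoints are extreme values of the map. From (H2), $E_c$ is the global maximiser of $\Gamma$ on $D$, so $\Gamma(D)=[m,E_{\max}]$ with $m:=\min\{\Gamma(0),\Gamma(\bar E)\}$. In particular $E_{\min}=\Gamma(E_{\max})\le E_{\max}$, so $J_\Gamma$ is a well-defined nonempty subinterval of $D$. Everything is then reduced to locating $E_{\min}$ and $E_{\max}$ relative to $E_c$ and using monotonicity on each lap.

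For (i), I would first treat the main case $E_{\min}\le E_c\le E_{\max}$, i.e. $E_c\in J_\Gamma$. Then $\max_{J_\Gamma}\Gamma=\Gamma(E_c)=E_{\max}$, giving the upper inclusion. On the decreasing lap $[E_c,E_{\max}]$ we have $\Gamma(x)\ge\Gamma(E_{\max})=E_{\min}$. On the increasing lap $[E_{\min},E_c]$ we have $\Gamma(x)\ge\Gamma(E_{\min})$. So the lower inclusion reduces to the single inequality $\Gamma(E_{\min})\ge E_{\min}$, i.e. $\Gamma^3(E_c)\ge\Gamma^2(E_c)$.

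To get this inequality, I would argue through the fixed point. Since $\Gamma(E_c)=E_{\max}\ge E_c$ and $\Gamma$ is decreasing on $[E_c,\bar E]$, the fixed point $E^*$ guaranteed by Proposition \ref{atleast} can be located in $[E_c,E_{\max}]$. Then $E_{\min}=\Gamma(E_{\max})\le\Gamma(E^*)=E^*$. The difficulty is that on $[E_{\min},E^*]$ one needs $\Gamma(x)\ge x$ near $E_{\min}$. This holds if $E^*$ is the only fixed point to the left of $E_{\max}$, since then $\Gamma(x)-x>0$ on $[0,E^*)$ follows from $\Gamma(0)\ge0$ and continuity (the case $\Gamma(0)=0$ needs a separate look). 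I expect this step to be the \emph{main obstacle}: (H1)–(H2) alone do not exclude a second fixed point in $[E_{\min},E_c)$. The argument above therefore tacitly uses uniqueness of the fixed point to the left of $E_c$, or directly $\Gamma(E_{\min})\ge E_{\min}$, and I would state this explicitly as the needed extra condition.

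The degenerate cases should be disposed of separately. If $E_{\max}<E_c$, then $\Gamma$ is increasing on $[0,E_{\max}]$, so $\Gamma$ restricted to $J_\Gamma$ is monotone and (i) follows by checking endpoints. If $E_{\min}>E_c$, then $\Gamma$ is decreasing on $J_\Gamma$, which maps $J_\Gamma$ onto $[\Gamma(E_{\max}),\Gamma(E_{\min})]\subseteq[E_{\min},E_{\max}]$.

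For (ii), let $x\in D$. We have $\Gamma(x)\le E_{\max}$ and hence, as $E_{\max}\ge E_c$ in the main case, $\Gamma(x)$ lies either in $[m,E_c]$ or in $[E_c,E_{\max}]$. In the second case $\Gamma^2(x)\ge\Gamma(E_{\max})=E_{\min}$. In the first case $\Gamma^2(x)\ge\Gamma(m)$, and one needs $\Gamma(m)\ge E_{\min}$. This holds when $m\ge E_{\min}$ by the same fixed-point/lap argument as in (i), and must otherwise be checked directly from (\ref{eq:mappa}) at $m=\Gamma(0)$ or $m=\Gamma(\bar E)$. The upper bound $\Gamma^2(x)\le E_{\max}$ is automatic. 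Once $\Gamma^2(D)\subseteq J_\Gamma$ is established, the statement on orbits is immediate: take $n\le2$ and iterate the invariance (i).
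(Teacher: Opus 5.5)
The step that actually fails is your treatment of the degenerate case $E_{\max}<E_c$. There $J_\Gamma\subset[0,E_c)$, so $\Gamma(J_\Gamma)=[\Gamma(E_{\min}),\Gamma(E_{\max})]=[\Gamma(E_{\min}),E_{\min}]$. Strict monotonicity gives $\Gamma(E_{\min})<\Gamma(E_{\max})=E_{\min}$ whenever $E_{\min}<E_{\max}$, so checking endpoints refutes (i) rather than proving it. A concrete case is $f(x)=\tfrac32x(1-x)$ on $[0,1]$. It satisfies (H1)--(H2) with $E_c=\tfrac12$, $E_{\max}=\tfrac38$ and $E_{\min}=\tfrac{45}{128}$. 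Yet $f(E_{\min})=\tfrac{11205}{32768}<\tfrac{45}{128}$, because the orbit of $E_c$ decreases to the fixed point $\tfrac13\notin J_\Gamma$. Also $f^2([0,1])=[0,\tfrac{45}{128}]\not\subseteq J_\Gamma$. This case cannot be proved; it must be ruled out by an extra hypothesis such as $\Gamma(E_c)\ge E_c$.

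In the main case your reduction and your diagnosis are both correct. The lower bound in (i) is exactly $\Gamma(E_{\min})\ge E_{\min}$, and the two-step claim in (ii) needs $\Gamma(m)\ge E_{\min}$. Neither follows from (H1)--(H2): for $f(x)=\tfrac72x(1-x)$, (i) holds, but $f^2(0)=0<E_{\min}\approx0.383$, so (ii) fails.

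The paper's own proof does not supply the missing ingredient; it skips over the same point. Its Step 1 writes $\Gamma(J_\Gamma)=(\Gamma\circ\Gamma)([E_c,x_R])$ with $\Gamma([E_c,x_R])=J_\Gamma$, and asserts ``by monotonicity'' that the minimum is $\Gamma(\Gamma(E_c))=E_{\min}$. That minimum equals $\min_{J_\Gamma}\Gamma$, and the assertion is valid only if $\Gamma$ is decreasing on $J_\Gamma$, i.e.\ $E_c\le E_{\min}$. That is your easy subcase, which you handle correctly. Outside it, the paper tacitly assumes $\Gamma(E_{\min})\ge E_{\min}$. Its Step 2, which applies Step 1 to $[0,E_{\max}]$, tacitly assumes $\Gamma(0)\ge E_{\min}$.

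A correct version therefore needs the conditions you isolated stated as hypotheses:
$\Gamma(E_c)\ge E_c$;
$\Gamma(E_{\min})\ge E_{\min}$, which follows from $\Gamma(0)>0$ together with uniqueness of the fixed point in $[0,E_{\max}]$;
$\Gamma(m)\ge E_{\min}$ for absorption within two iterates, or else (ii) weakened to eventual absorption.

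A minor point: you do not need Proposition~\ref{atleast} to produce $E^*$. The intermediate value theorem applied to $\Gamma(x)-x$ on $[E_c,E_{\max}]$ gives it directly. Proposition~\ref{atleast}, by contrast, carries extra parameter restrictions and does not locate the fixed point.
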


\begin{proof} \phantom{dns}\\
Since $\Gamma$ is strictly increasing on $[0,E_c]$ and strictly decreasing on $[E_c,\bar E]$, we have
\[
\Gamma([0,E_c])=[\Gamma(0),E_{\max}],\qquad 
\Gamma([E_c,\bar E])=[\Gamma(\bar E),E_{\max}].
\]
Hence
\begin{equation}\label{eq:image_D}
\Gamma(D)=\Gamma([0,\bar E])=[m,E_{\max}], \qquad 
m:=\min\{\Gamma(0),\Gamma(\bar E)\}.
\end{equation}

\medskip\noindent
\textbf{Step 1: Forward invariance of $J_\Gamma$.}\\
By definition, $J_\Gamma=[\Gamma(E_{\max}),E_{\max}]=[\Gamma^2(E_c),\Gamma(E_c)]$.
Because $\Gamma$ is strictly decreasing on the right branch $[E_c,\bar E]$ and 
$E_{\max}\in\Gamma([E_c,\bar E])$ by~\eqref{eq:image_D}, 
there exists a unique $x_R\in[E_c,\bar E]$ such that
\[
\Gamma([E_c,x_R])=J_\Gamma,\qquad 
\Gamma(x_R)=\min J_\Gamma=E_{\min},\quad 
\Gamma(E_c)=\max J_\Gamma=E_{\max}.
\]
Therefore,
\[
\Gamma(J_\Gamma)
=\Gamma\big(\Gamma([E_c,x_R])\big)
=(\Gamma\circ\Gamma)([E_c,x_R]).
\]
Since $\Gamma$ is continuous on $[E_c,x_R]$, the image $(\Gamma\circ\Gamma)([E_c,x_R])$ 
is the interval between the extreme values of $\Gamma\circ\Gamma$ on this set.
Using the monotonicity of $\Gamma$:
\[
\begin{array}{l}
\min_{y\in[E_c,x_R]}(\Gamma\circ\Gamma)(y)
=\Gamma(\Gamma(E_c))=\Gamma(E_{\max})=E_{\min} \\
\max_{y\in[E_c,x_R]}(\Gamma\circ\Gamma)(y)
=\Gamma(\Gamma(x_R))=\Gamma(E_{\min})\le E_{\max},
\end{array}
\]
where the last inequality holds since $E_{\max}$ is the global maximum of $\Gamma$ on $D$.
Hence
\[
\Gamma(J_\Gamma)\subseteq [E_{\min},E_{\max}]=J_\Gamma,
\]
which proves forward invariance.

\medskip\noindent
\textbf{Step 2: Absorption in at most two iterates.} \\
From~\cref{eq:image_D}, $\Gamma(D)=[m,E_{\max}]\subseteq [0,E_{\max}]$.
Applying $\Gamma$ once more and using Step~1 on any interval whose right endpoint is $E_{\max}$, we obtain:
\[
\Gamma([0,E_{\max}])\subseteq [\Gamma(E_{\max}),E_{\max}]
=[E_{\min},E_{\max}]=J_\Gamma.
\]
Therefore $\Gamma^2(D)\subseteq J_\Gamma$. 
Therefore, for each $E_0\in D$ there exists $n\le 2$ such that $\Gamma^n(E_0)\in J_\Gamma$, 
and by forward invariance all subsequent iterates remain in $J_\Gamma$.
\end{proof}

The result indicates that when the dynamical system is self-mapped and unimodal, all trajectories remain confined within a bounded region of the phase plane. Whether they converge to an equilibrium point, a stable periodic orbit, or a chaotic attractor, they are all contained within a specific subset of the plane known as the \emph{absorbing region} (or \emph{trapping region}).


\subsection{No preference for wage premium}

In this section we investigate the case where only the positional agents have a preference for the wage premium. This corresponds to the case where $\rho_\pi =0$. 

Given the high non-linearity of $\Gamma(E_t)$, resulting from the exponential preference functions and the interaction between agent types, an analytical study of the map's fixed points and their stability is intractable. We therefore rely on numerical simulations to characterize the system's long-run behaviour.

Our analysis focuses on two primary factors: (A) the behavioural reactivity of the two groups, and (B) the population composition, $\lambda$. Factor (A) concerns the interplay between the imitative drive of Followers (governed by $\rho$) and the counter-adaptive positional drive of Positional Agents (governed by $\sigma_E$, $\sigma_{\pi}$, and the premium's sensitivity $\kappa$). Factor (B) explores how the population mix itself influences aggregate stability.

Starting from the behavioural reactivity, \Cref{fig:bifurcation_paramenters} shows these dynamics. Panel (a) shows the bifurcation diagram with respect to $\rho$ (holding all other parameters constant) while Panel (b) shows the bifurcation diagram with respect to $\sigma$.
\FloatBarrier
\begin{figure}[htbp]
  \centering
  \begin{subfigure}[b]{0.48\textwidth}
    \centering
    \includegraphics[width=\textwidth]{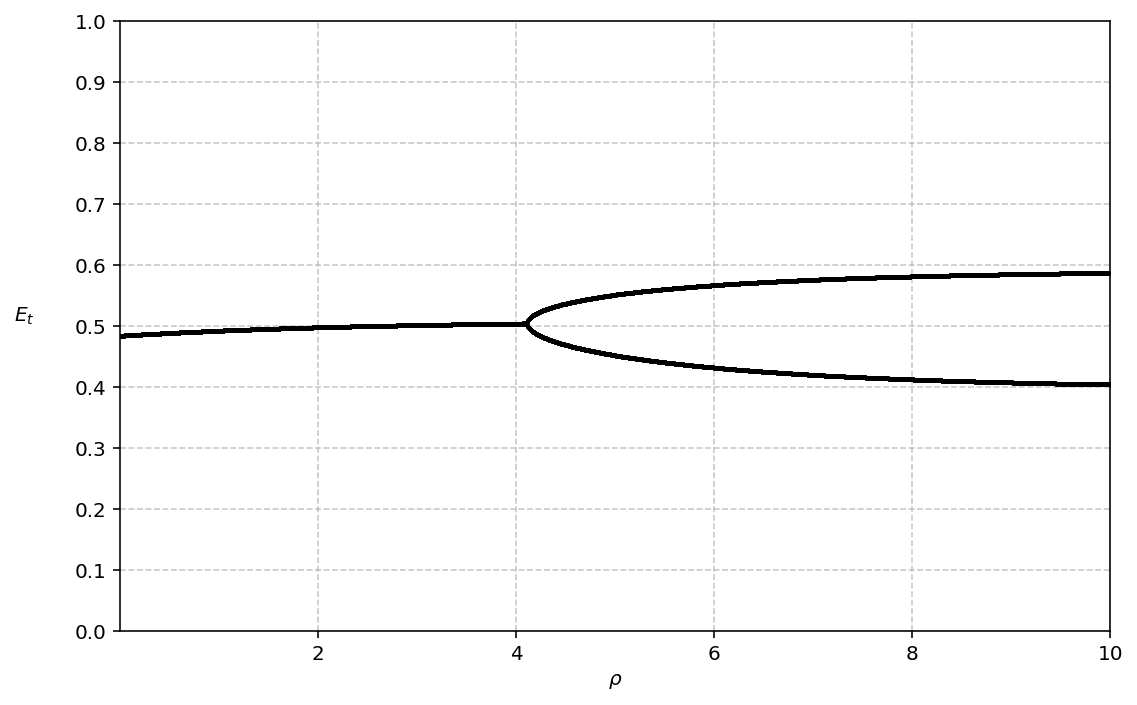}
    \caption{Bifurcation Diagram with respect to $\rho$}
    \label{fig:A}
  \end{subfigure}
  \hfill
  \begin{subfigure}[b]{0.48\textwidth}
    \centering
    \includegraphics[width=\textwidth]{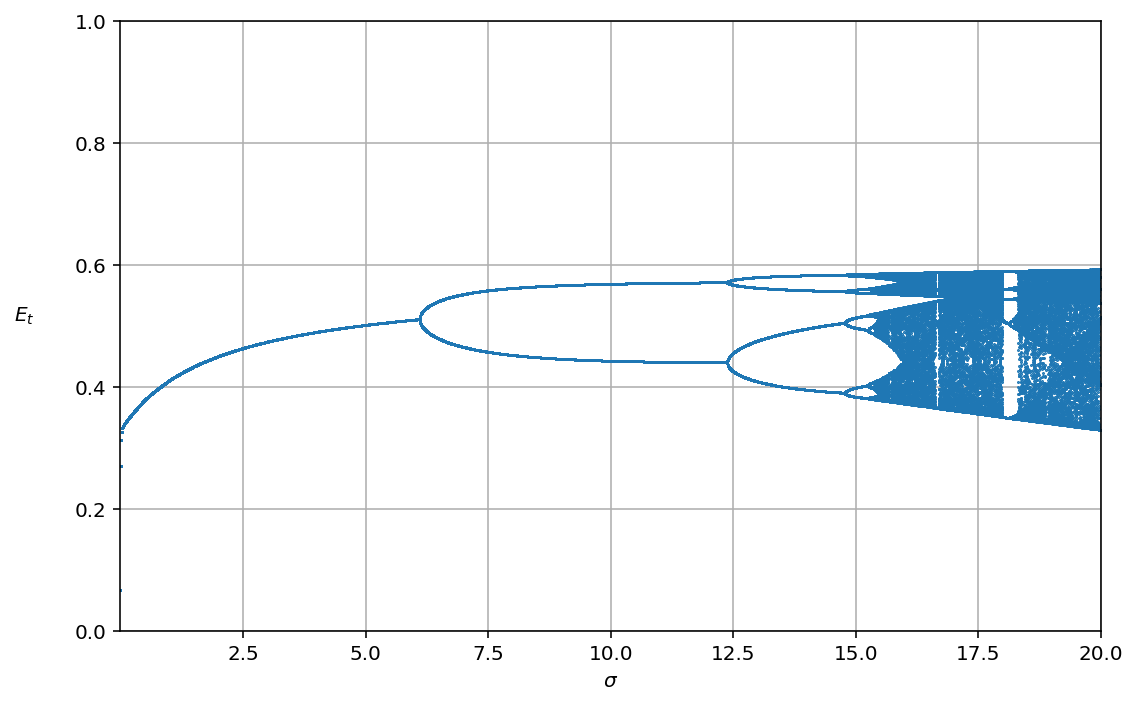}
    \caption{Bifurcation Diagram with respect to $\sigma$}
    \label{fig:B}
  \end{subfigure}
  \caption{Bifurcation Diagrams of behavioural reactivity. Parameter set: $\rho = 0.98$, $I = 1.0$, $p_e = 1.2$, $p_c = 0.53$, $\lambda = 0.5$, $\kappa = 0.3$, $\overline{\Pi}= 100$. }
  \label{fig:bifurcation_paramenters}
\end{figure}
\FloatBarrier


\Cref{fig:bifurcation_paramenters} reveals a crucial asymmetry in the destabilizing power of the two agent groups. Holding the Followers' imitative reactivity ($\rho$) constant, we observe that the Positional Agents' counter-adaptive sensitivity (e.g., $\sigma_E$) acts as a powerful destabilizing force, as shown in Panel (b). Starting from a stable steady state, the system undergoes its first flip bifurcation at approximately $\sigma \approx 5.8$. As this parameter increases further, it triggers a full period-doubling cascade (or "Feigenbaum route"), rapidly pushing the system into a chaotic regime for $\sigma \gtrsim 16$. This confirms that the "snobbish" drive for distinction, combined with the negative feedback from the wage premium, is a primary engine of complex dynamics.Conversely, the role of the Followers' imitative drive ($\rho$) is substantially different. Holding the positional parameters constant, increasing $\rho$ does amplify the system's non-linearity, but its destabilizing effect is far more contained. As shown in Panel (a),  even for very high values of $\rho$, the system at most undergoes a single flip bifurcation (at $\rho \approx 4.1$), resulting in a stable 2-cycle that persists without further bifurcations.This key finding suggests that while pro-cyclical imitation ($\rho$) can induce simple, regular oscillations, it is the strong counter-adaptive force ($\sigma_E$, $\sigma_{\Pi}$) of the Positional Agents that is ultimately responsible for generating complex, erratic, and chaotic trajectories in the aggregate demand for education.

To better visualize the nature of these distinct dynamic regimes, we now turn to a direct analysis of the one-dimensional map, $E_{t+1} = H(E_t)$, and its corresponding time series. 

\FloatBarrier
\begin{figure}[htbp]
  \centering
  \includegraphics[scale=0.5]{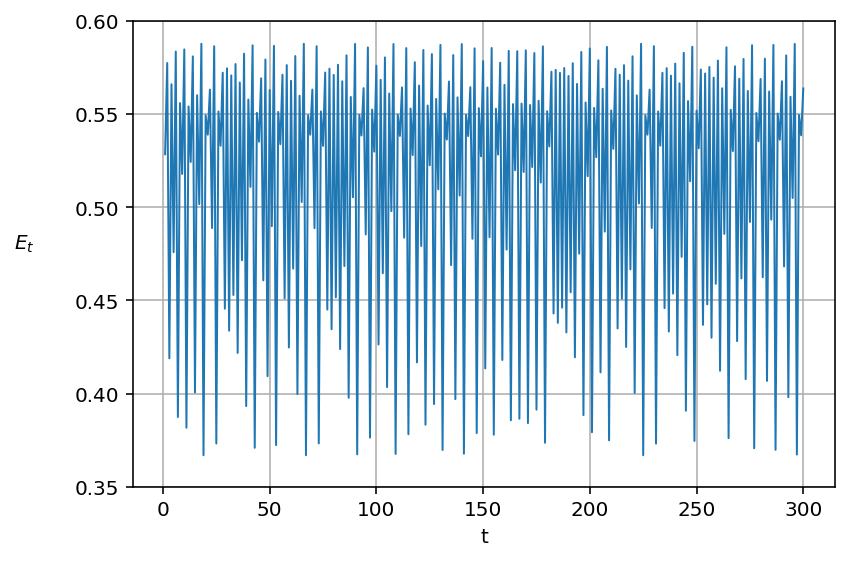}
  \caption{Time series of $E$. {Parameter set: $\rho = 0.98$, $I = 1.0$, $p_e = 1.2$, $p_c = 0.53$, $\lambda = 0.5$, $\kappa = 0.3$, $\overline{\Pi}= 100$, $\sigma = 16.5$.}}
  \label{fig:time_series}
\end{figure}
\FloatBarrier

\Cref{fig:time_series} displays the system's dynamics within the chaotic regime, generated using a high positional reactivity value of $\sigma = 16.5$. This parameter setting corresponds to the chaotic attractor identified in the bifurcation diagram (\Cref{fig:bifurcation_paramenters}).
The panel plots the aggregate education level, $E_t$, over 300 time periods. The trajectory clearly demonstrates the defining characteristics of deterministic chaos: the system fails to converge to either a stable fixed point or a regular periodic cycle.

Instead, $E_t$ exhibits persistent, aperiodic, and bounded oscillations. The fluctuations are highly irregular and erratic, yet they remain confined within a specific range (in this simulation, approximately $E_t \in [0.38, 0.59]$). This reflects the undesirable policy outcome of endogenous instability. In such a regime, the education market is subject to perpetual, unpredictable volatility, making rational long-term planning for both students and institutions practically impossible.

\FloatBarrier
\begin{figure}[htbp]
  \centering
  \includegraphics[scale=0.6]{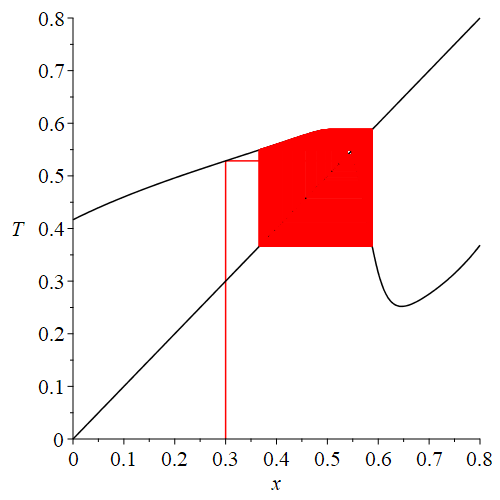}
  \caption{Cobweb Diagram of $E_t$}
  \label{fig:ragnatela}
\end{figure}
\FloatBarrier

\Cref{fig:ragnatela} provides the geometric explanation for the chaotic dynamics observed in the time series. This cobweb diagram plots the one-dimensional map $E_{t+1} = \Gamma(E_t)$ (the black curve, with $E_{t+1}$ labeled as $T$) against the 45-degree bisector (the $E_{t+1} = E_t$ identity line).

The figure clearly illustrates the multimodality (i.e., the complex, non-monotonic, "humped" shape) of the map $H(E_t)$. This non-linearity is the fundamental driver of the complex dynamics, as it is a direct result of the conflicting pro-cyclical (Follower) and counter-cyclical (Positional) forces present in the model.

The large, red-filled area highlights the chaotic attractor, or trapping region. This is the bounded subset of the phase space to which the system is confined in the long run. The red lines trace the path of a sample trajectory starting from an initial condition of $E_0 \approx 0.3$. The system is immediately pulled into this attractor after the first iteration.

Once inside this region, the map's non-monotonicity causes the trajectory to "stretch and fold" back onto itself, ensuring that the dynamics are both non-convergent and aperiodic. This confirms that the aggregate education level will never settle into a stable equilibrium or a simple, predictable cycle.

Finally, we analyse the role of the population structure, $\lambda$. The bifurcation diagram in \Cref{fig:bifurcation_lambda} shows the system's long-run dynamics as $\lambda$ varies from $0$ (only Positional Agents) to $1$ (only Followers), holding the reactivity parameters fixed at values that *could* generate chaos.

\begin{figure}[htbp]
  \centering
  \includegraphics[scale=0.4]{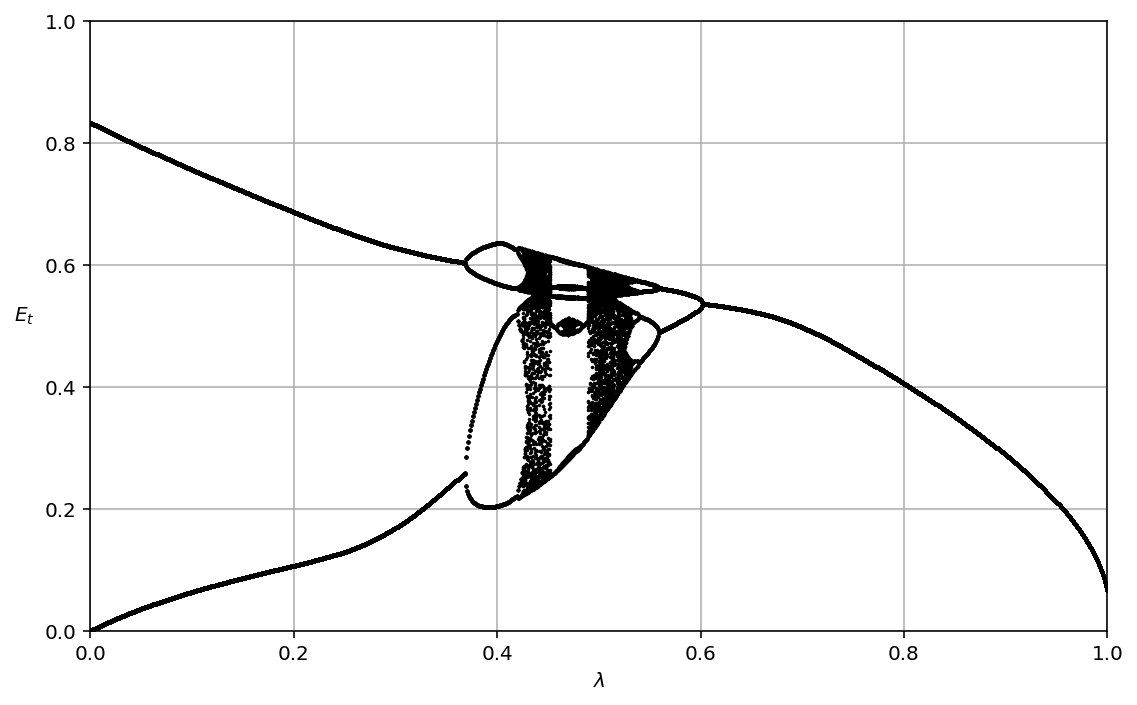}
  \caption{Bifurcation Diagram with respect to $\lambda$. {Parameter set: $\rho = 1.18$, $I = 1.0$, $p_e = 1.2$, $p_c = 0.53$, $\kappa = 0.3$, $\overline{\Pi}= 100$, $\sigma = 17.9$.}}
  \label{fig:bifurcation_lambda}
\end{figure}
\FloatBarrier

The effect of $\lambda$ on stability is complex and non-monotonic. When the population is dominated by Positional Agents ($\lambda \to 0$), the dynamics are stable or exhibit simple cycles. Similarly, when the population is dominated by Followers ($\lambda \to 1$), the strong imitative consensus leads to a stable steady state. The most complex dynamics—including the chaotic regime—emerge only for intermediate, heterogeneous populations (e.g., $\lambda \in [0.2, 0.6]$ in our example).

This result highlights a key finding: the emergence of complex cycles in education demand is not a feature of either group in isolation. Rather, it is an emergent property of the social conflict between the imitative, conforming behaviour of Followers and the counter-adaptive, status-seeking behaviour of Positional Agents.

This implies that a certain degree of social heterogeneity is the key driver of the most unstable and unpredictable outcomes. From a policy perspective, this presents a significant challenge: it suggests that the most difficult planning problems (e.g., funding, labour market forecasting) arise precisely in diverse societies where these conflicting social behaviours coexist, rather than in more behaviourally homogeneous ones.
\FloatBarrier

\subsection{Preference for wage premium}
We now introduce a crucial variation where Followers also develop a preference for the wage premium (i.e., $\rho_{\Pi} > 0$). This fundamentally alters their behaviour: their purely imitative (pro-cyclical) drive is now dampened by a counter-cyclical economic concern. When enrolment ($E_t$) is high, the resulting low wage premium ($\Pi_t$) now provides a negative feedback for both groups, reducing the overall behavioural conflict.

This shared economic rationality acts as a stabilizing force. As shown in \Cref{fig:ciclo2}, we use the identical parameter set ($\sigma = 16.5$) that previously generated chaos. With $\rho_{\Pi} > 0$, the chaotic attractor collapses, and the system's dynamics simplify dramatically into a stable, regular 2-cycle.

This result  suggests the most severe instabilities (chaos) are not driven by positional concerns alone, but by the sharp conflict between pure imitation and positional, counter-adaptive behaviour. A shared economic signal aligns incentives and restores a predictable, cyclical pattern.


\FloatBarrier
\begin{figure}[htbp]
  \centering
  \begin{subfigure}[b]{0.48\textwidth}
    \centering
    \includegraphics[width=\textwidth]{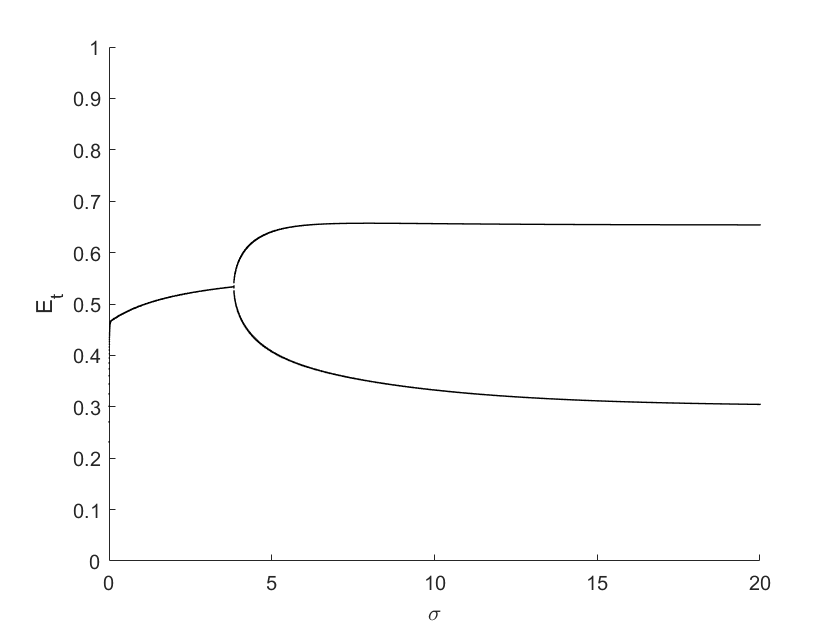}
    \caption{Bifurcation Diagram with respect to $\rho$}
    \label{subfig:bifurcation_2}
  \end{subfigure}
  \hfill
  \begin{subfigure}[b]{0.48\textwidth}
    \centering
    \includegraphics[scale = 0.55]{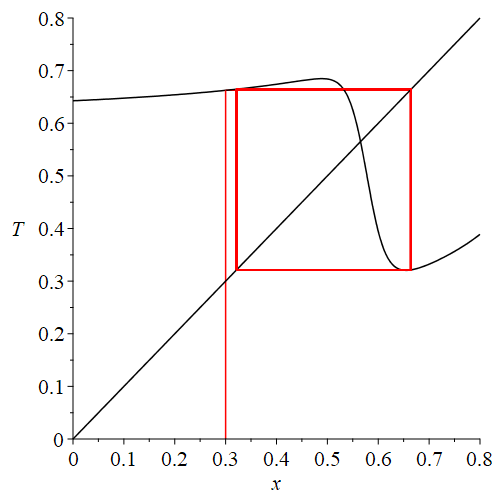}
    \caption{Bifurcation Diagram with respect to $\sigma$}
    \label{subfig:cobweb_2}
  \end{subfigure}
  \caption{Bifurcation Diagrams of behavioural reactivity. Parameter set: $\rho = 0.98$, $I = 1.0$, $p_e = 1.2$, $p_c = 0.53$, $\lambda = 0.5$, $\kappa = 0.3$, $\overline{\Pi}= 100$. }
  \label{fig:ciclo2}
\end{figure}
\FloatBarrier

\subsection{Effects of sensitivity to increases of supply of educated workers}
Finally the next proposition characterises the study of comparative statics on the effect of a change of the sensitivity of the increase of the supply of educated workers on the wage premium. This result is derived by means of the implicit function theorem.
\begin{proposition}
\label{prop:comparativestat_kappa}
Let $\Gamma(E;\kappa)$ denote the map in equation~(10), where the wage premium is 
$\Pi_t=\Pi-\kappa E_t>0$. 
Let $E^\star\in(0,\bar E)$ be an interior fixed point satisfying
\begin{equation}
F(E,\kappa)\equiv \Gamma(E;\kappa)-E=0, \qquad \bar E=\frac{I}{p_e},
\end{equation}
and assume local stability, $|\Gamma_E(E^\star;\kappa)|<1$.
Then:

\begin{enumerate}[label=(\roman*)]
\item There exists a continuously differentiable function 
$E^\star(\kappa)$ such that $F(E^\star(\kappa),\kappa)=0$, and
\begin{equation}
\frac{dE^\star}{d\kappa}
=\frac{\Gamma_\kappa(E^\star;\kappa)}{\,1-\Gamma_E(E^\star;\kappa)\,}.
\label{eq:dEdkappa}
\end{equation}

\item In the regime $\Pi^\star=\Pi-\kappa E^\star>0$, the map $\Gamma$ is increasing in $\Pi$ (i.e.\ $\Gamma_\Pi>0$) and 
$\partial \Pi/\partial \kappa=-E<0$. Hence
\[
\Gamma_\kappa=\Gamma_\Pi\frac{\partial\Pi}{\partial\kappa}
=-E\,\Gamma_\Pi<0,
\]
and since local stability implies $1-\Gamma_E>0$, it follows that
\begin{equation}
\frac{dE^\star}{d\kappa}<0.
\end{equation}
An increase in the wage-premium sensitivity $\kappa$ reduces the long-run enrolment level $E^\star$.
\end{enumerate}

\noindent

\end{proposition}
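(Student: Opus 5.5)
The plan is to apply the implicit function theorem to $F(E,\kappa)=\Gamma(E;\kappa)-E$ near $(E^\star,\kappa)$. First, regularity. In the regime $\Pi^\star=\overline{\Pi}-\kappa E^\star>0$, the $\max$ in the wage premium is not active on a neighbourhood of $(E^\star,\kappa)$, so $\Pi=\overline{\Pi}-\kappa E$ there. On this neighbourhood $\Gamma$ is a composition of exponentials and rational functions, and its denominators $\alpha_{i}+\beta_{i}$ are strictly positive. Indeed, for $E\in(0,\bar E)$ we have $C=(I-p_eE)/p_c>0$, so $\beta_F>0$ and $\beta_P>0$. Hence $F$ is $C^1$ there.

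Next, the nondegeneracy. We have $F_E=\Gamma_E-1$, and local stability $|\Gamma_E(E^\star;\kappa)|<1$ gives $\Gamma_E<1$, so $F_E(E^\star,\kappa)\neq0$. The implicit function theorem then yields a $C^1$ branch $E^\star(\kappa)$ satisfying $F(E^\star(\kappa),\kappa)=0$. Differentiating this identity gives $\Gamma_E\,dE^\star/d\kappa+\Gamma_\kappa-dE^\star/d\kappa=0$, which is \eqref{eq:dEdkappa}. This proves (i).

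For (ii), $\kappa$ enters $\Gamma$ only through $\Pi$ at fixed $E$. So $\Gamma_\kappa=\Gamma_\Pi\,\partial\Pi/\partial\kappa=-E\,\Gamma_\Pi$. It then suffices to show $\Gamma_\Pi>0$ termwise, using that $x\mapsto x/(x+b)$ is increasing in $x>0$ for fixed $b>0$.

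For followers, $\alpha_F=1-\exp(-\rho E-\rho_\pi\Pi)$ is nondecreasing in $\Pi$ (strictly if $\rho_\pi>0$), and $\beta_F$ does not depend on $\Pi$.

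For positional agents, $\alpha_P=\exp(-\sigma E)(1-\exp(-\sigma_\pi\Pi))$ is strictly increasing in $\Pi$ when $\sigma_\pi>0$, and $\beta_P$ does not depend on $\Pi$.

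Thus $\Gamma_\Pi>0$ whenever $(1-\lambda)\sigma_\pi>0$ or $\lambda\rho_\pi>0$. Since $E^\star>0$, this gives $\Gamma_\kappa<0$. Together with $1-\Gamma_E>0$, we conclude $dE^\star/d\kappa<0$.

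The main obstacle is ensuring strict positivity of $\Gamma_\Pi$ rather than mere nonnegativity. This requires a nondegeneracy assumption, for instance $\sigma_\pi>0$ with $\lambda<1$, which I will state explicitly. A second point to handle is keeping the regime $\Pi>0$ valid on a neighbourhood, which follows from continuity of $\overline{\Pi}-\kappa E$ and strictness of $\Pi^\star>0$.
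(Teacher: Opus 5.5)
Your proof is correct and follows the paper's argument. Both apply the implicit function theorem to $F=\Gamma-E$, using $1-\Gamma_E>0$ from stability, and then write $\Gamma_\kappa=-E\,\Gamma_\Pi$ with $\Gamma_\Pi>0$ checked term by term; the paper's substitution $x=e^{A}$ for the followers and its computation $\partial g/\partial b>0$ for the positional agents are the same monotonicity of $x/(x+b)$ that you use. Your explicit nondegeneracy condition ($\sigma_\pi>0$ with $\lambda<1$) and your check that the regime $\Pi>0$ holds on a neighbourhood are assumptions the paper uses only implicitly, so stating them is a small improvement.
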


\begin{proof}[Proof]
Within the interior regime $\Pi_t=\Pi-\kappa E_t$, the map $\Gamma$ is smooth in both arguments. 
Applying the Implicit Function Theorem to $F(E,\kappa)=0$ directly yields~\eqref{eq:dEdkappa}. 
The sign of $\tfrac{dE^\star}{d\kappa}$ depends on $\Gamma_\kappa$, which can be evaluated via $\Gamma_\Pi$.

\smallskip
\noindent
Let $A=-\rho E-\rho_\pi \Pi$ and $K=\exp\!\left(-\rho\frac{I-p_eE}{p_c}\right)\in(0,1]$. 
The fraction
\[
f(\Pi)=\frac{1-e^{A}}{2-e^{A}-K}
\]
appearing in the followers’ component of~(10) satisfies $f_\Pi>0$. 
Indeed, setting $x=e^{A}$ gives $f(x)=\frac{1-x}{2-x-K}$ with 
$f'(x)=\frac{K-1}{(2-x-K)^2}\le0$ and $dx/d\Pi=-\rho_\pi e^{A}<0$, 
so that $df/d\Pi\ge0$.

\noindent For the positional part of~(10), write
\[
g(\Pi)=\frac{a\,b}{a\,b+k},\qquad 
a=e^{-\sigma E},\;
b=1-e^{-\sigma_\pi\Pi},\;
k=e^{-\sigma\frac{I-p_eE}{p_c}}.
\]
Then $\partial g/\partial b=\tfrac{a\,k}{(a\,b+k)^2}>0$ and 
$db/d\Pi=\sigma_\pi e^{-\sigma_\pi\Pi}>0$, so $g_\Pi>0$.
\noindent Both components of $\Gamma$ are increasing in $\Pi$, implying $\Gamma_\Pi>0$. 
Since $\partial\Pi/\partial\kappa=-E$, one has $\Gamma_\kappa<0$. 
For a stable fixed point, $1-\Gamma_E>0$, and hence 
$\tfrac{dE^\star}{d\kappa}<0$.
\end{proof}
This result is consistent with the economic intuition.\footnote{If the fixed point lies in the saturation regime $\Pi^\star=0$, then $\Gamma$ does not depend on $\kappa$ locally and $\tfrac{dE^\star}{d\kappa}=0$ as long as the equilibrium remains in that region.
At the transition $\Pi^\star=0$ (regime switch) or when $|\Gamma_E|=1$ (bifurcation points), the Implicit Function Theorem does not apply and the dependence of $E^\star$ on $\kappa$ may become non-differentiable or discontinuous.} A higher sensitivity of the wage premium to the increase in skilled-labour supply, leads the a larger reduction of the wage premium for the same level of aggregate education. Thus, education will be less valuable, leading to a lower equilibrium value. 

\begin{remark}
The negative effect of $\kappa$ on $E^\star$ holds irrespective of whether followers’ preferences include the wage-premium term ($\rho_\pi>0$) or not, because all components depending on $\Pi$ are monotonically increasing in $\Pi$. 
\end{remark}

\section{Dynamic Trajectory with endogenous population structure}
\label{sec:endogenous}

The results shown in the previous section arise from a study where the population structure is fixed; the share of Followers ($\lambda$) and Positional Agents ($1-\lambda$) cannot change. It is reasonable to examine how the system's dynamics may change in a framework where individuals can move from one group to another through an endogenous mechanism. To this end, we extend the baseline model by endogenizing the population share 
of Followers, $\lambda_t\in[0,1]$. The aggregate state is 
now $(E_t,\lambda_t)\in D:=\big[0,\bar E\big]\times[0,1]$, with $\bar E=I/p_e$. Given the Cobb--Douglas utility $U_{i,t}(e,c;\alpha_{i,t},\beta_{i,t})=\alpha_{i,t}\log e+\beta_{i,t}\log c$,
the optimal demands at time $t$ (conditional on $\alpha_{i,t+1},\beta_{i,t+1}$) are:
\begin{equation}
e^{\ast}_{i,t}=\frac{\alpha_{i,t+1}}{\alpha_{i,t+1}+\beta_{i,t+1}}\cdot \frac{I}{p_e},
\qquad 
c^{\ast}_{i,t}=\frac{\beta_{i,t+1}}{\alpha_{i,t+1}+\beta_{i,t+1}}\cdot \frac{I}{p_c},
\qquad i\in\{F,P\}.
\label{eq:demands}
\end{equation}
Evaluated at \eqref{eq:demands}, the corresponding (type-specific) indirect utilities are:
\begin{equation}
\begin{aligned}
\mathcal{U}_{i,t}
&:=U_{i,t}\Big(e^{\ast}_{i,t},c^{\ast}_{i,t};\alpha_{i,t+1},\beta_{i,t+1}\Big) \\
&=\alpha_{i,t+1}\!\left[\log\!\left(\frac{I}{p_e}\right)+\log\!\left(\frac{\alpha_{i,t+1}}{\alpha_{i,t+1}+\beta_{i,t+1}}\right)\right]
+\beta_{i,t+1}\!\left[\log\!\left(\frac{I}{p_c}\right)+\log\!\left(\frac{\beta_{i,t+1}}{\alpha_{i,t+1}+\beta_{i,t+1}}\right)\right].
\end{aligned}
\label{eq:indirectU}
\end{equation}
Note that via \cref{eq:parameters}, both $\alpha_{i,t+1}$ and $\beta_{i,t+1}$ 
depend on $E_t$ (and hence on $\lambda_t$ only through $E_t$).\\ The aggregate dynamics are given by the map $\Phi:D\to D$ defined as:
\begin{equation}
\Phi:\quad 
\begin{cases}
E_{t+1} = \displaystyle \frac{I}{p_e}\left[
\lambda_t \frac{\alpha_{F,t+1}}{\alpha_{F,t+1}+\beta_{F,t+1}}
+ (1-\lambda_t)\frac{\alpha_{P,t+1}}{\alpha_{P,t+1}+\beta_{P,t+1}}
\right] 
=: \Gamma\!\big(E_t;\lambda_t\big),\\[1.2em]
\lambda_{t+1} = 
\displaystyle \frac{\exp\!\big(\mu\,\mathcal{U}_{F,t}\big)}
{\exp\!\big(\mu\,\mathcal{U}_{F,t}\big)+\exp\!\big(\mu\,\mathcal{U}_{P,t}\big)}
=: \mathcal{V}\!\big(E_t,\lambda_t\big),
\end{cases}
\label{eq:2Dsystem}
\end{equation}
where $\mu\ge 0$ measures the \emph{willingness to switch}. According to \citet{Naimzada2018}, it describes the ability of individuals to analyse and compare the benefits made to decide the type of behaviour to be adopted in the next period, i.e. their willingness to change and move towards the preferences that guaranteed the highest utility level in the previous period. In particular, we can notice that for $\mu\rightarrow\,0$ the individuals continue to choose the preferences they chose in the previous period, i.e. $\lambda_{t+1}=\lambda_{t}$; when $\mu>0$, the function in $\mathcal{V}\!\big(E_t,\lambda_t\big)$ is an increasing monotone function of $U_{F,t}(\alpha_{F,t},\beta_{F,t},e^{\ast}_{F,t},c^{\ast}_{F,t})-U_{P,t}(\alpha_{P,t},\beta_{P,t},e^{\ast}_{P,t},c^{\ast}_{P,t})$; as $\mu\rightarrow\,+\infty$, the individuals move in one shot toward the preferences that guaranteed the highest utility level in the previous period. Specifically, (i) if $U_{F,t}(\alpha_{F,t},\beta_{F,t},e^{\ast}_{F,t},c^{\ast}_{F,t})<U_{P,t}(\alpha_{P,t},\beta_{P,t},e^{\ast}_{P,t},c^{\ast}_{P,t})$, then $\omega^{'}\rightarrow\,0$ as $\mu\rightarrow\,+\infty$, (ii) if $U_{F,t}(\alpha_{F,t},\beta_{F,t},e^{\ast}_{F,t},c^{\ast}_{F,t})>U_{P,t}(\alpha_{P,t},\beta_{P,t},e^{\ast}_{P,t},c^{\ast}_{P,t})$, then $\omega^{'}\rightarrow\,1$ as $\mu\rightarrow\,+\infty$.\\
By construction, $\Gamma:[0,\bar E]\times[0,1]\to[0,\bar E]$ and 
$\mathcal{V}:[0,\bar E]\times[0,1]\to[0,1]$, so $D$ is forward invariant. On $D$, the map $\Phi$ is continuous (indeed, smooth) except possibly at the regime-switch frontier $\Pi_t=0$, where the premium saturates. For $\Pi_t>0$, $\Phi$ is $C^\infty$ in all arguments and parameters.

Let $F(E,\lambda):=\Gamma(E;\lambda)-E$ and $G(E,\lambda):=\mathcal{V}(E,\lambda)-\lambda$.
At an interior fixed point $(E^\star,\lambda^\star)$, the Jacobian is
\[
J(E^\star,\lambda^\star)=
\begin{pmatrix}
\Gamma_E & \Gamma_\lambda\\
\mathcal{V}_E & \mathcal{V}_\lambda-1
\end{pmatrix}_{(E^\star,\lambda^\star)}.
\]
Using \eqref{eq:parameters}--\eqref{eq:indirectU}, one gets
\[
\Gamma_\lambda = \frac{I}{p_e}\left[\frac{\alpha_{F}}{\alpha_{F}+\beta_{F}}-\frac{\alpha_{P}}{\alpha_{P}+\beta_{P}}\right],
\qquad
\mathcal{V}_\lambda = 0,\quad
\mathcal{V}_E = 
\mu\,\mathcal{V}(1-\mathcal{V})\cdot\big(\mathcal{U}_{F,E}-\mathcal{U}_{P,E}\big),
\]
where subscripts denote partial derivatives evaluated at $(E^\star,\lambda^\star)$ and 
\[
\mathcal{U}_{i,E}=
\alpha'_{i}\left[\log\!\left(\frac{I}{p_e}\right)+\log\!\left(\frac{\alpha_{i}}{\alpha_{i}+\beta_{i}}\right)+1-\frac{\alpha_{i}}{\alpha_{i}+\beta_{i}}\right]
+\beta'_{i}\left[\log\!\left(\frac{I}{p_c}\right)+\log\!\left(\frac{\beta_{i}}{\alpha_{i}+\beta_{i}}\right)+1-\frac{\beta_{i}}{\alpha_{i}+\beta_{i}}\right]
\]
where $\alpha'_i,\beta'_i$ collect the derivatives of \cref{eq:parameters} w.r.t.\ $E$.
Despite the complexity of the analysis we provide 2 proposition concerning the local stability of the fixed point of the map (assuming it exists). 
\begin{proposition}
\label{prop:local_stability}
Consider the two-dimensional map $\Phi(E,\lambda)=\big(\Gamma(E;\lambda),\,\mathcal V(E,\lambda)\big)$
with $\Gamma$ and $\mathcal V$. 
Let $(E^\star,\lambda^\star)$ be an \emph{interior} fixed point and the Jacobian matrix
\[
J(E^\star,\lambda^\star)=
\begin{pmatrix}
\Gamma_E & \Gamma_\lambda\\
\mathcal V_E & 0
\end{pmatrix}_{(E^\star,\lambda^\star)},
\qquad
\tau:=\mathrm{tr}\,J=\Gamma_E,\quad
\Delta:=\det J=-\,\Gamma_\lambda\,\mathcal V_E.
\]
Then $(E^\star,\lambda^\star)$ is (locally) asymptotically stable if and only if the Schur conditions hold:
\[
1-\Gamma_E-\Gamma_\lambda\mathcal V_E>0,\qquad
1+\Gamma_E-\Gamma_\lambda\mathcal V_E>0,\qquad
1+\Gamma_\lambda\mathcal V_E>0.
\]
\end{proposition}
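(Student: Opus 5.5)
The proof reduces to the standard Jury (Schur--Cohn) criterion for a real $2\times2$ matrix, applied to the Jacobian displayed in the statement. First I will justify that Jacobian. At an interior fixed point with $\Pi^\star>0$, $\Phi$ is $C^\infty$, so local asymptotic stability is decided by the linearisation: the fixed point is locally asymptotically stable if both eigenvalues of $J(E^\star,\lambda^\star)$ lie strictly inside the unit circle. The entry $\mathcal V_\lambda=0$ follows because $\mathcal V$ depends on $\lambda_t$ only through $\mathcal U_{F,t}$ and $\mathcal U_{P,t}$, and these depend only on $E_t$ by \cref{eq:parameters} and \cref{eq:indirectU}. Hence $\tau=\Gamma_E$ and $\Delta=-\Gamma_\lambda\mathcal V_E$.

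The characteristic polynomial is $p(z)=z^2-\tau z+\Delta$. I will then prove the classical fact that both roots of $p$ lie in the open unit disc if and only if
\[
p(1)=1-\tau+\Delta>0,\qquad p(-1)=1+\tau+\Delta>0,\qquad 1-\Delta>0.
\]
\emph{Necessity.} If both roots lie in the open disc, then $\Delta$, being their product, satisfies $|\Delta|<1$. Also $p$ is positive for large real $z$ and has no real root in $[1,\infty)$ or $(-\infty,-1]$, so $p(\pm1)>0$. When the roots are complex conjugate, $p$ has no real roots at all, so $p(\pm1)>0$ again.

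\emph{Sufficiency.} Suppose the roots are complex. Then $|z|^2=\Delta<1$. Suppose instead the roots are real. The conditions $p(\pm1)>0$ mean that $-1$ and $1$ lie on the same side of both roots. This gives three possibilities: both roots in $(-1,1)$, both roots beyond $1$, or both roots beyond $-1$ on the other side. In the last two cases $|\Delta|>1$, which contradicts $\Delta<1$ together with the sign pattern. Since both roots have the same sign there, $\Delta>0$ and $\Delta>1$, a contradiction. So both roots lie in $(-1,1)$.

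Substituting $\tau$ and $\Delta$ gives the three conditions in the statement: $1-\Gamma_E-\Gamma_\lambda\mathcal V_E>0$, $1+\Gamma_E-\Gamma_\lambda\mathcal V_E>0$, and $1+\Gamma_\lambda\mathcal V_E>0$.

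The main obstacle is the \emph{only if} direction. Linearisation gives a genuine equivalence only away from the boundary cases: if some eigenvalue has modulus exactly $1$, the linear test is inconclusive, and the fixed point may still be nonlinearly asymptotically stable. I will handle this by reading the proposition in the hyperbolic sense. The conditions are equivalent to linear (spectral) stability. When the inequalities hold the point is asymptotically stable. When one of them fails strictly, an eigenvalue lies outside the unit circle and the point is unstable. The equality cases correspond to the fold, flip and Neimark--Sacker bifurcation boundaries, and I will state these as excluded, in the same way the footnote to \Cref{prop:comparativestat_kappa} excludes $|\Gamma_E|=1$.
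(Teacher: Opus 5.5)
Your proposal is correct and follows the same route as the paper: its appendix simply invokes the Schur--Cohn (Jury) criterion for $p(z)=z^2-\tau z+\Delta$ with $\tau=\Gamma_E$ and $\Delta=-\Gamma_\lambda\mathcal V_E$, and you additionally prove that criterion and justify $\mathcal V_\lambda=0$. Your caveat on the ``only if'' direction is also well taken: the conditions characterise spectral (linear) stability, and they characterise nonlinear asymptotic stability only once the non-hyperbolic equality cases are excluded, a point the paper's statement glosses over.
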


\begin{proof}[Proof]
See the appendix.
\end{proof}

\begin{proposition}
\label{prop:mu-threshold}
Let $(E^\star,\lambda^\star)$ be an interior fixed point with $\Pi^\star>0$.
Define
\[
g^\star:=\big|\Gamma_E(E^\star,\lambda^\star)\big|,\qquad
h^\star:=\big|\mathcal U_{F,E}(E^\star)-\mathcal U_{P,E}(E^\star)\big|.
\]
Then a sufficient condition for local asymptotic stability is
\begin{equation}
\label{eq:l1-bound-main}
\quad |\Gamma_E|+|\Gamma_\lambda\,\mathcal V_E|<1\quad \text{at }(E^\star,\lambda^\star).
\end{equation}
Moreover, since $0\le \mathcal V(1-\mathcal V)\le \tfrac14$ and $|s_F-s_P|\le 1$, one has
\[
|\Gamma_\lambda\,\mathcal V_E|
\le \frac{I}{p_e}\cdot \frac{\mu}{4}\cdot h^\star,
\]
so that \eqref{eq:l1-bound-main} is guaranteed whenever
\begin{equation}
\label{eq:mu-threshold-main}
\quad \mu\;<\;\bar\mu\;:=\;\frac{4p_e}{I}\,\frac{1-g^\star}{\,h^\star\,}\,,\quad\text{with }\,0\le g^\star<1.
\end{equation}
Hence, any $\mu<\bar\mu$ ensures local asymptotic stability of $(E^\star,\lambda^\star)$.
\end{proposition}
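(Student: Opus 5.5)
We work directly with the Jacobian of Proposition~\ref{prop:local_stability}, whose characteristic polynomial is $P(z)=z^2-\Gamma_E z-\Gamma_\lambda\mathcal V_E$. The plan is to bound the spectral radius by a matrix norm, then bound the off-diagonal product using the explicit formulas for $\Gamma_\lambda$ and $\mathcal V_E$ derived before Proposition~\ref{prop:local_stability}.

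\textbf{Step 1 (sufficient condition).} We want every eigenvalue $z$ of $J$ to satisfy $|z|<1$. A column-sum norm does not directly give $|\Gamma_E|+|\Gamma_\lambda\mathcal V_E|$, so we argue on $P$ itself. Suppose $|z|\ge 1$ and $P(z)=0$. Then $z^2=\Gamma_E z+\Gamma_\lambda\mathcal V_E$, and dividing by $z$ gives
\[
|z| \le |\Gamma_E|+\frac{|\Gamma_\lambda\mathcal V_E|}{|z|} \le |\Gamma_E|+|\Gamma_\lambda\mathcal V_E| < 1,
\]
which is a contradiction. Hence both roots lie in the open unit disc, and local asymptotic stability follows from the standard linearization theorem. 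Since $\Pi^\star>0$, the map $\Phi$ is $C^\infty$ near the fixed point, so linearization is legitimate. As a consistency check, one can verify that \eqref{eq:l1-bound-main} implies the three Schur inequalities of Proposition~\ref{prop:local_stability}: each has the form $1\pm\Gamma_E\mp\dots>0$, which follows from the triangle inequality.

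\textbf{Step 2 (bounding the product).} Write $s_i=\alpha_i/(\alpha_i+\beta_i)\in[0,1]$. Then $\Gamma_\lambda=\frac{I}{p_e}(s_F-s_P)$, so $|\Gamma_\lambda|\le I/p_e$. Next, $\mathcal V_E=\mu\,\mathcal V(1-\mathcal V)(\mathcal U_{F,E}-\mathcal U_{P,E})$. The logistic factor satisfies $\mathcal V(1-\mathcal V)\le 1/4$, by AM--GM applied to $\mathcal V\in[0,1]$. Multiplying the two estimates gives
\[
|\Gamma_\lambda\mathcal V_E|\le \frac{I}{p_e}\cdot\frac{\mu}{4}\cdot h^\star .
\]

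\textbf{Step 3 (threshold).} Substituting the bound of Step~2 into \eqref{eq:l1-bound-main}, it suffices that $g^\star+\frac{I\mu}{4p_e}h^\star<1$. When $g^\star<1$ and $h^\star>0$, this is equivalent to $\mu<\bar\mu$. If $h^\star=0$, condition \eqref{eq:l1-bound-main} reduces to $g^\star<1$ and holds for every $\mu$, so we read $\bar\mu=+\infty$.

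\textbf{Main obstacle.} The only substantive point is Step~1: we must make sure that the $\ell^1$-type condition really controls the eigenvalues, since the plain row-sum norm of $J$ gives $\max(|\Gamma_E|+|\Gamma_\lambda|,|\mathcal V_E|)$ instead. The direct root argument above avoids this issue. A second subtlety is that $h^\star$ depends on $\lambda^\star$ only through $E^\star$, but the fixed point itself moves with $\mu$. So $\bar\mu$ should be read as a condition evaluated at the given fixed point, not as a uniform threshold.
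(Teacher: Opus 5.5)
Your proof is correct, and Steps 2 and 3 coincide with the paper's argument: bound $|s_F-s_P|\le 1$ and $\mathcal V(1-\mathcal V)\le \tfrac14$, then solve for $\mu$.

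The only real difference is Step 1. The paper sets $x=\Gamma_E$ and $y=-\Gamma_\lambda\mathcal V_E$. It then shows that $|x|+|y|<1$ implies the three Schur--Cohn inequalities of Proposition~\ref{prop:local_stability}, via $1-x+y\ge 1-|x|-|y|>0$ and its two analogues, and invokes that criterion. Your ``consistency check'' is exactly this argument. Your main argument instead bounds the roots of $z^2-\Gamma_E z-\Gamma_\lambda\mathcal V_E$ directly, by writing $z=\Gamma_E+\Gamma_\lambda\mathcal V_E/z$ for a putative root with $|z|\ge 1$.

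Your route is self-contained: it does not need the Jury/Schur--Cohn criterion. It also carries over verbatim to monic characteristic polynomials of any degree whose non-leading coefficients have absolute values summing to less than one. The paper's route ties the sufficient condition to the saddle-node, flip and Neimark--Sacker boundaries it has just written down, making visible which stability condition the $\ell_1$ bound is protecting.

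Your two additional remarks are also correct and address points the paper leaves implicit. First, when $h^\star=0$ one should read $\bar\mu=+\infty$. Second, $(E^\star,\lambda^\star)$, and hence $g^\star$ and $h^\star$, themselves depend on $\mu$, so $\bar\mu$ is a threshold evaluated at the given fixed point rather than a uniform one.
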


\begin{proof}
Inequality \eqref{eq:l1-bound-main} implies the Schur conditions because, writing
$x:=\Gamma_E$ and $y:=-\Gamma_\lambda\mathcal V_E=\Delta$, the bound
$|x|+|y|<1$ yields $1-x+y>0$, $1+x+y>0$, and $1-y>0$.
Using $\mathcal V_E=\mu\,\mathcal V(1-\mathcal V)\big(\mathcal U_{F,E}-\mathcal U_{P,E}\big)$
and $\Gamma_\lambda=\tfrac{I}{p_e}(s_F-s_P)$, we obtain
$|\Gamma_\lambda\mathcal V_E|\le \tfrac{I}{p_e}\tfrac{\mu}{4}h^\star$ and thus the threshold \eqref{eq:mu-threshold-main}.
Further details are available in the Appendix.
\end{proof}
After having established the local stability conditions of the fixed point for the map $\Phi$ and shown that, for sufficiently low values of $\mu$, the fixed point can preserve its stability, we now focus on a numerical investigation of the possible emergence of chaotic dynamics in the two-dimensional system.

First, it can be observed that, also in the two-dimensional case, an increase in the reactivity of the snob agent may induce the onset of irregular trajectories, both with respect to the levels expressed by education and to the population structure. Specifically, for sufficiently high values of $\sigma$, oscillations can be observed that drive $\lambda$ alternately towards its extreme values (monomorphic population) and towards intermediate levels (see Panels (a) and (b) in the \Cref{fig:endo}).

Furthermore, it can be noted that the parameter regulating the willingness to change type also contributes to destabilizing the dynamics. Indeed, as illustrated in Panels (a) and (b) of \Cref{fig:mu}, even starting from a given (and sufficiently low) value of $\sigma$, increasing $\mu$ towards sufficiently high levels leads to the emergence of complex dynamics.

In conclusion, these numerical experiments indicate that, for extreme values of the parameters governing agents’ reactivity and the structure of the population, the dynamics of both education and population structure become trapped within a chaotic attractor.

\FloatBarrier
\begin{figure}[htbp]
  \centering
  \begin{subfigure}[b]{0.48\textwidth}
    \centering
    \includegraphics[width=\textwidth]{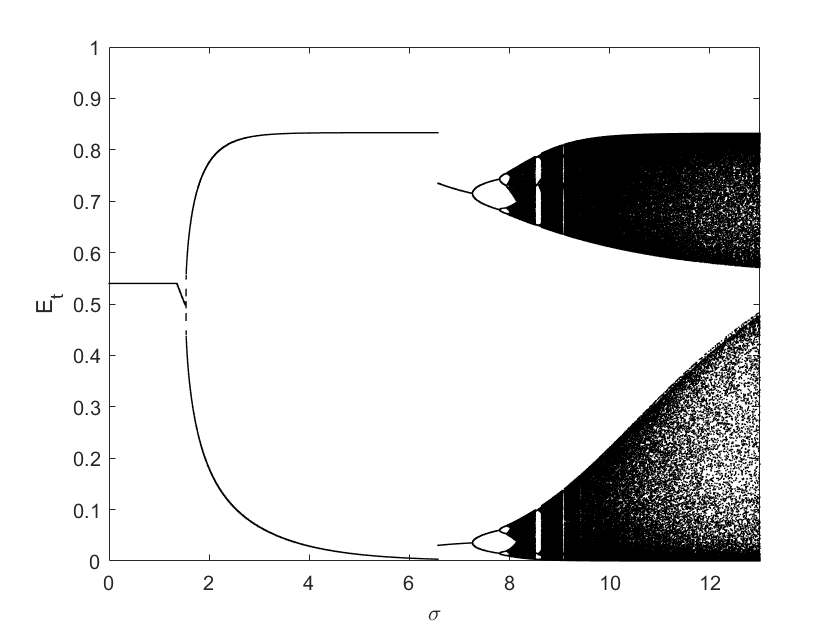}
    \caption{Bifurcation Diagram with respect to $\sigma$}
    \label{subfig:bifurcation_endo_sigma}
  \end{subfigure}
  \hfill
  \begin{subfigure}[b]{0.48\textwidth}
    \centering
    \includegraphics[width=\textwidth]{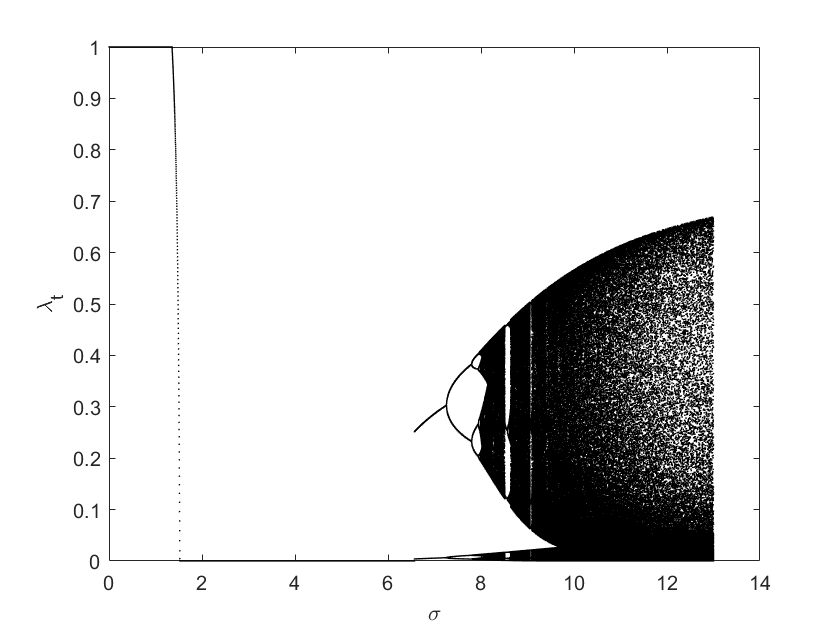}
    \caption{Bifurcation Diagram with respect to $\sigma$}
    \label{subfig:bifurcation_end_lambda}
  \end{subfigure}
  \caption{Bifurcation Diagrams of behavioural reactivity. Parameter set: $\rho = 0.98$, $I = 1.0$, $p_e = 1.2$, $p_c = 0.53$, $\kappa = 0.3$, $\overline{\Pi}= 100$. }
  \label{fig:endo}
\end{figure}
\FloatBarrier

In a two dimensional system the endogeneity of $\lambda$ leads to a propagation of chaos, both with respect to $\lambda$ and $\sigma$. 

Finally, it is worth showing that the actual propagator of the chaotic behaviour of this system is the role played by $\mu$. 

\FloatBarrier
\begin{figure}[htbp]
  \centering
  \begin{subfigure}[b]{0.48\textwidth}
    \centering
    \includegraphics[width=\textwidth]{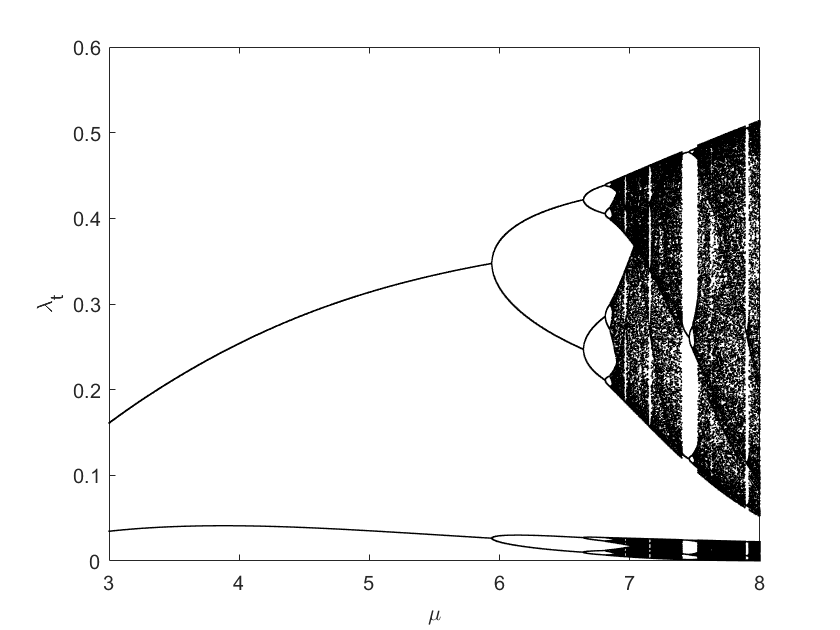}
    \caption{}
    \label{subfig:mu_1}
  \end{subfigure}
  \hfill
  \begin{subfigure}[b]{0.48\textwidth}
    \centering
    \includegraphics[width=\textwidth]{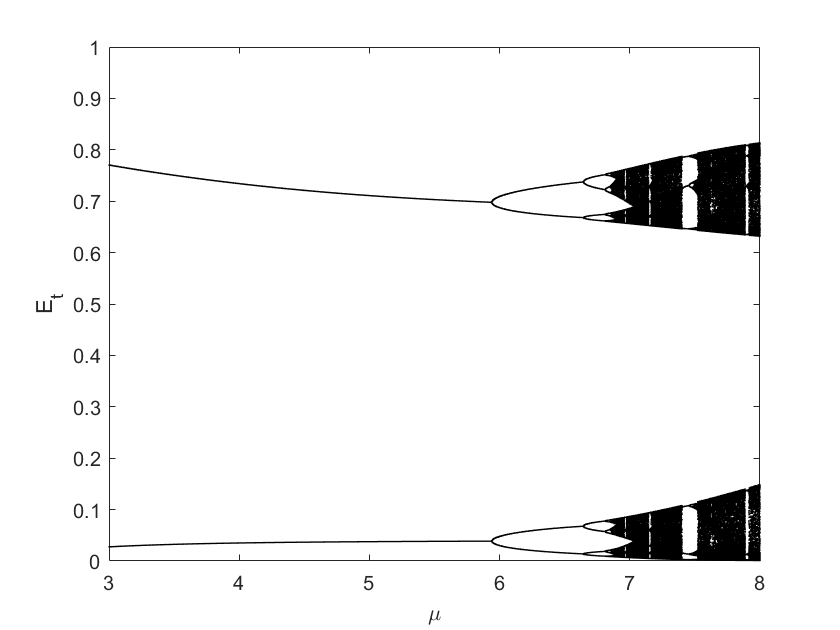}
    \caption{}
    \label{subfig:mu_2}
  \end{subfigure}
  \caption{Bifurcation Diagrams of behavioural reactivity. Parameter set: $\rho = 0.98$, $I = 1.0$, $p_e = 1.2$, $p_c = 0.53$, $\kappa = 0.3$, $\overline{\Pi}= 100$. }
  \label{fig:mu}
\end{figure}
\FloatBarrier
  
\section{Conclusion}
\label{sec:conclusion}

In this article, we have analysed the emergence of endogenous fluctuations and complex phenomena in a discrete-time dynamic model of educational choice. Our framework models a heterogeneous population composed of Followers (driven by imitative, bandwagon behaviour) and Positional Agents (driven by counter-adaptive, snobbish behaviour). Agents' preferences for education evolve endogenously, reacting not only to past aggregate enrolment ($E_t$) but also to an endogenous wage premium ($\Pi_t$) that declines with the supply of educated workers.

We have shown that in a polymorphic population, the interaction between these two behavioural types is a powerful source of instability. While the imitative drive of Followers ($\rho$) can amplify oscillations, we find that it is the counter-adaptive reactivity of Positional Agents (their sensitivity to status, $\sigma_E$, and economic returns, $\sigma_{\Pi}$) that acts as the primary destabilizing force. Our numerical exercises have highlighted the occurrence of complex educational cycles and the onset of chaotic regimes through a period-doubling (or flip) bifurcation cascade.

Furthermore, we have analysed the crucial role of the population structure itself. Rather than finding a simple stabilizing or destabilizing effect, we show that the population share of Followers, $\lambda$, has a non-monotonic impact on stability. Complex dynamics and chaos emerge only for intermediate, heterogeneous population mixes. Homogeneous populations, dominated by either pure Followers ($\lambda \to 1$) or pure Positional Agents ($\lambda \to 0$), tend to converge to simple, stable equilibria.

This finding suggests that the most complex collective outcomes arise precisely from the social conflict between the conforming drive of imitation and the counter-adaptive drive for distinction. From a policy standpoint, such endogenous volatility is a highly undesirable outcome. It hinders rational long-term planning for students, educational institutions, and the labour market. Indeed, in terms of policy, universities and public bodies cannot efficiently plan for future resource allocation. Periods of unexpectedly high demand would lead to overcrowded classrooms, diluted instructional quality, and strained budgets. These would inevitably be followed by periods of unexpectedly low demand, resulting in underutilized infrastructure, inefficient public spending, and faculty instability.

In the labour market, this volatility in the input ($E_t$) would translate directly into "boom-bust" cycles in the output (the supply of educated labour). The wage premium ($\Pi_t$), which is endogenous in our model, would itself become highly unstable. This scrambles the economic signals that education is meant to provide, making the returns on investment a gamble. Also, for individuals and households, this dynamic creates profound uncertainty for students and families. It becomes nearly impossible to make rational, long-term decisions about investing in human capital when the expected economic and social value of that investment is subject to wild, unpredictable swings.

Our analysis thus bridges the gap between non-linear social interaction models and the economics of education, highlighting the potential for intrinsic instability in systems driven by the interplay of economic incentives and peer pressure.

\subsection*{Acknowledgments}
No financial support was provided for this research.

\subsection*{Declaration of Interests}
The authors declare that they have no known competing financial interests or personal relationships that could have appeared to influence the work reported in this paper.


\subsection*{Declaration of generative AI and AI-assisted technologies in the manuscript preparation process}

During the preparation of this work the authors used ChatGPT, Gemini Pro, DeepSeek in order to code and proofread the manuscript. After using these tools, the authors reviewed and edited the content as needed and take full responsibility for the content of the published article.

\clearpage
\bibliography{reference}



\clearpage
\begin{appendices}

\section{Local derivations and proofs}
\label{app:local-derivations}

\subsection{Primitive objects and basic derivatives}

We recall
\[
C(E)=\frac{I}{p_c}-\frac{p_e}{p_c}E,\qquad 
\Pi(E)=\Pi-\kappa E,
\]
and, in the interior regime $\Pi>0$,
\[
C_E=-\frac{p_e}{p_c},\qquad \Pi_E=-\kappa.
\]
Preference parameters (time indices omitted for readability):
\[
\begin{aligned}
\alpha_{F}(E) &= 1-\exp\!\big(-\rho E-\rho_\pi \Pi(E)\big), 
&\qquad \beta_{F}(E) &= 1-\exp\!\big(-\rho C(E)\big),\\
\alpha_{P}(E) &= \exp(-\sigma E)\,\Big(1-\exp\big(-\sigma_\Pi \Pi(E)\big)\Big), 
&\qquad \beta_{P}(E) &= \exp\!\big(-\sigma C(E)\big).
\end{aligned}
\]
Define the shares
\[
s_i(E):=\frac{\alpha_i(E)}{\alpha_i(E)+\beta_i(E)}\in(0,1),\qquad i\in\{F,P\}.
\]

Derivatives of $\alpha_i$ and $\beta_i$.\\
Set $A:=-\rho E-\rho_\pi \Pi$, $B:=-\rho C$, $\widetilde A:=-\sigma E$, $\widetilde B:=-\sigma_\Pi \Pi$, $\widetilde C:=-\sigma C$. Using $C_E$ and $\Pi_E$:
\[
\begin{aligned}
\alpha_{F,E} &= \frac{d}{dE}\Big(1-e^{A}\Big) = -e^{A}\,A_E
=(\rho-\rho_\pi\kappa)\,e^{A},\\[0.2em]
\beta_{F,E} &= \frac{d}{dE}\Big(1-e^{B}\Big)= -e^{B}\,B_E
=\frac{\rho p_e}{p_c}\,e^{B},\\[0.3em]
\alpha_{P,E} &= \frac{d}{dE}\Big(e^{\widetilde A}\,(1-e^{\widetilde B})\Big)
= e^{\widetilde A}\Big[-\sigma(1-e^{\widetilde B}) + e^{\widetilde B}\,\sigma_\Pi\kappa\Big]\\
&= -\sigma e^{\widetilde A}\!(1-e^{\widetilde B}) - \kappa\sigma_\Pi e^{\widetilde A+\widetilde B},\\[0.2em]
\beta_{P,E} &= \frac{d}{dE}\big(e^{\widetilde C}\big)= e^{\widetilde C}\,\widetilde C_E
=\frac{\sigma p_e}{p_c}\,e^{\widetilde C}.
\end{aligned}
\]

Derivative of $s_i(E)$.\\
With $s_i=\alpha_i/(\alpha_i+\beta_i)$ and quotient rule:
\begin{equation}
\label{eq:app-siE}
s_{i,E}(E)=\frac{\alpha_{i,E}\,\beta_i-\alpha_i\,\beta_{i,E}}{(\alpha_i+\beta_i)^2},\qquad i\in\{F,P\}.
\end{equation}

\subsection{Individual demands and indirect utility}

Given $U_{i}(e,c;\alpha_i,\beta_i)=\alpha_i\log e+\beta_i\log c$ with prices $(p_e,p_c)$ and income $I$, the optimal demands (Cobb–Douglas) are
\[
e_i^\ast=\frac{\alpha_i}{\alpha_i+\beta_i}\,\frac{I}{p_e}=\; s_i\,\frac{I}{p_e},\qquad
c_i^\ast=\frac{\beta_i}{\alpha_i+\beta_i}\,\frac{I}{p_c}=\; (1-s_i)\,\frac{I}{p_c}.
\]
The indirect utility is
\begin{align}
\mathcal U_i(E)
&= \alpha_i\,\log\!\Big(s_i\frac{I}{p_e}\Big)+\beta_i\,\log\!\Big((1-s_i)\frac{I}{p_c}\Big)\nonumber\\
&= \alpha_i\Big[\log(I/p_e)+\log s_i\Big]+\beta_i\Big[\log(I/p_c)+\log(1-s_i)\Big].
\label{eq:app-Ui}
\end{align}

Derivative $\mathcal U_{i,E}$.\\
Differentiate \eqref{eq:app-Ui} using product rule and $s_i=\alpha_i/(\alpha_i+\beta_i)$:
\[
\frac{d}{dE}\big(\alpha_i\log s_i\big)=\alpha_{i,E}\log s_i+\alpha_i\,\frac{s_{i,E}}{s_i},\quad
\frac{d}{dE}\big(\beta_i\log(1-s_i)\big)=\beta_{i,E}\log(1-s_i)-\beta_i\,\frac{s_{i,E}}{1-s_i}.
\]
But $\displaystyle \alpha_i\frac{1}{s_i}-\beta_i\frac{1}{1-s_i}
=\frac{\alpha_i+\beta_i}{s_i(1-s_i)}\Big(s_i-(1-s_i)\Big)=0$ because $s_i=\alpha_i/(\alpha_i+\beta_i)$. Hence the $s_{i,E}$-terms cancel out and we obtain the compact formula
\begin{equation}
\label{eq:app-UiE-compact}
\quad 
\mathcal U_{i,E}
=\alpha_{i,E}\!\Big[\log(I/p_e)-\log s_i\Big]
+\beta_{i,E}\!\Big[\log(I/p_c)-\log(1-s_i)\Big],
\quad i\in\{F,P\}.
\end{equation}
Equivalently, expanding logs and re-grouping yields the alternative expression used in the main text:
\begin{align}
\mathcal U_{i,E}
&=\alpha_{i,E}\left[\log\!\left(\frac{I}{p_e}\right)+\log\!\left(\frac{\alpha_i}{\alpha_i+\beta_i}\right)+1-\frac{\alpha_i}{\alpha_i+\beta_i}\right]\nonumber\\
&\quad+\beta_{i,E}\left[\log\!\left(\frac{I}{p_c}\right)+\log\!\left(\frac{\beta_i}{\alpha_i+\beta_i}\right)+1-\frac{\beta_i}{\alpha_i+\beta_i}\right],
\label{eq:app-UiE-expanded}
\end{align}
which is algebraically equivalent to \eqref{eq:app-UiE-compact}.

\subsection{Aggregate map and composition dynamics}

The aggregate map and the composition rule are
\[
\Gamma(E;\lambda)=\frac{I}{p_e}\Big[\lambda\,s_F(E)+(1-\lambda)\,s_P(E)\Big],\qquad
\mathcal V(E,\lambda)=\frac{e^{\mu\mathcal U_F(E)}}{e^{\mu\mathcal U_F(E)}+e^{\mu\mathcal U_P(E)}}.
\]

Partials of $\Gamma$.\\
Using \cref{eq:app-siE}:
\begin{equation}
\label{eq:app-Gamma-partials}
\Gamma_E(E;\lambda)=\frac{I}{p_e}\Big[\lambda\,s_{F,E}(E)+(1-\lambda)\,s_{P,E}(E)\Big],\qquad
\Gamma_\lambda(E;\lambda)=\frac{I}{p_e}\Big[s_F(E)-s_P(E)\Big].
\end{equation}

Partials of $\mathcal V$.\\
For the logit in indirect utilities, differentiate w.r.t.\ $E$ and $\lambda$:
\begin{equation}
\label{eq:app-V-partials}
\mathcal V_\lambda(E,\lambda)=0,\qquad
\mathcal V_E(E,\lambda)=\mu\,\mathcal V(E,\lambda)\big(1-\mathcal V(E,\lambda)\big)\,\Big(\mathcal U_{F,E}(E)-\mathcal U_{P,E}(E)\Big).
\end{equation}
The bound $0\le \mathcal V(1-\mathcal V)\le \tfrac14$ holds by concavity of $x(1-x)$ on $[0,1]$.

\subsection{Jacobian, Schur conditions, and bifurcation curves}

At an interior fixed point $(E^\star,\lambda^\star)$, the Jacobian of $\Phi(E,\lambda)=(\Gamma,\mathcal V)$ is
\begin{equation}
\label{eq:app-J}
J(E^\star,\lambda^\star)=
\begin{pmatrix}
\Gamma_E & \Gamma_\lambda\\
\mathcal V_E & 0
\end{pmatrix}_{(E^\star,\lambda^\star)},\qquad
\tau:=\mathrm{tr}\,J=\Gamma_E,\quad
\Delta:=\det J=-\,\Gamma_\lambda\,\mathcal V_E.
\end{equation}
By the Schur–Cohn criterion, the eigenvalues lie in the open unit disk iff
\begin{equation}
\label{eq:app-Schur}
1-\tau+\Delta>0,\qquad 1+\tau+\Delta>0,\qquad 1-\Delta>0,
\end{equation}
i.e.,
\[
1-\Gamma_E-\Gamma_\lambda\mathcal V_E>0,\qquad
1+\Gamma_E-\Gamma_\lambda\mathcal V_E>0,\qquad
1+\Gamma_\lambda\mathcal V_E>0.
\]
Local bifurcations occur on the codimension-one curves:
\begin{align}
&\text{saddle-node:} && 1-\Gamma_E-\Gamma_\lambda\mathcal V_E=0,\label{eq:app-SN}\\
&\text{flip (period-2):} && 1+\Gamma_E-\Gamma_\lambda\mathcal V_E=0,\label{eq:app-flip}\\
&\text{Neimark--Sacker:} && -\,\Gamma_\lambda\mathcal V_E=1,\quad -2<\Gamma_E<2.\label{eq:app-NS}
\end{align}

\subsection{A sufficient $\ell_1$-type bound and a threshold on $\mu$}
\label{app:mu-threshold-proof}

Let $x:=\Gamma_E$ and $y:=-\Gamma_\lambda\mathcal V_E=\Delta$. If
\begin{equation}
\label{eq:app-l1}
|x|+|y|<1,
\end{equation}
then $1-x+y\ge 1-|x|-|y|>0$, $1+x+y\ge 1-|x|-|y|>0$, and $1-y\ge 1-|y|>0$. Hence \eqref{eq:app-Schur} holds and the fixed point is locally asymptotically stable.

Moreover, since $\Gamma_\lambda=\tfrac{I}{p_e}(s_F-s_P)$ and $0\le \mathcal V(1-\mathcal V)\le \tfrac14$,
\begin{equation}
\label{eq:app-GLV-bound}
|\Gamma_\lambda\,\mathcal V_E|
=\left|\frac{I}{p_e}(s_F-s_P)\,\mu\,\mathcal V(1-\mathcal V)\,\big(\mathcal U_{F,E}-\mathcal U_{P,E}\big)\right|
\le \frac{I}{p_e}\cdot\frac{\mu}{4}\cdot \big|\mathcal U_{F,E}-\mathcal U_{P,E}\big|.
\end{equation}
Setting
\[
g^\star:=|\Gamma_E(E^\star,\lambda^\star)|,\qquad
h^\star:=|\mathcal U_{F,E}(E^\star)-\mathcal U_{P,E}(E^\star)|,
\]
a sufficient condition for \eqref{eq:app-l1} is
\[
g^\star+\frac{I}{p_e}\cdot\frac{\mu}{4}\,h^\star<1
\quad\Longleftrightarrow\quad
\mu<\bar\mu:=\frac{4p_e}{I}\,\frac{1-g^\star}{h^\star},
\]
with $0\le g^\star<1$. This yields Proposition~\ref{prop:mu-threshold} in the main text.

\subsection{Conservative computable bounds for $g^\star$ and $h^\star$}

Using $s_i\in(0,1)$ and \eqref{eq:app-siE}:
\[
|\Gamma_E|\le \frac{I}{p_e}\Big(\lambda^\star|s_{F,E}|+(1-\lambda^\star)|s_{P,E}|\Big)
\le \frac{I}{p_e}\max\{|s_{F,E}|,|s_{P,E}|\}.
\]
With $\alpha_i,\beta_i\in(0,1)$ and the closed-form derivatives above, one can build an explicit upper bound $\widehat g^\star\ge g^\star$. Similarly, from \eqref{eq:app-UiE-compact},
\[
|\mathcal U_{i,E}|\le |\alpha_{i,E}|\Big|\log(I/p_e)-\log s_i\Big|
+|\beta_{i,E}|\Big|\log(I/p_c)-\log(1-s_i)\Big|,
\]
so that a conservative $\widehat h^\star\ge h^\star$ follows by bounding $s_i$ away from $0$ and $1$ using the calibrated parameter ranges. Substituting $(\widehat g^\star,\widehat h^\star)$ in $\bar\mu$ gives a sufficient (possibly tighter) threshold.

\end{appendices}

\end{document}